\DeclareMathOperator{\MAC}{MAC}
\def\nb0{{\mathbf{0}}}
\def\nb1{{\mathbf{1}}}
\def\ncalA{{\mathcal{A}}}
\def\ncalC{{\mathcal{C}}}
\def\ncalE{{\mathcal{E}}}
\def\ncalL{{\mathcal{L}}}
\def\ncalN{{\mathcal{N}}}
\def\ncalR{{\mathcal{R}}}
\def\ncalS{{\mathcal{S}}}
\def\ncalT{{\mathcal{T}}}
\def\ncalX{{\mathcal{X}}}
\def\nbbE{{\mathbb{E}}}
\def\nbbP{{\mathbb{P}}}
\def\nbbZ{{\mathbb{Z}}}
\newtheorem{lemma}{Lemma}
\newtheorem{definition}{Definition}	
\newtheorem{theorem}{Theorem}
\newtheorem{assumption}{Assumption}
\newtheorem{example}{Example}
\newtheorem{remark}{Remark}
\def\ntaus{{\tau_s}}				
\def\ntaum{{\tau_m}}				
\def\nKslot{{K^{(\tau_s)}}}   		
\def\nKsslot{{K^{(\tau_s)}_s}}		
\def\nKmslot{{K^{(\tau_m)}}}		
\def\nKsmslot{{K^{(\tau_m)}_s}}	
\def\ntKmslot{{\overline{K}^{(\tau_m)}}}		
\def\ntKsmslot{{\overline{K}^{(\tau_m)}_s}}	
\def\ntKmslots{{\overline{K}}}		
\def\ntKsmslots{{\overline{K}_s}}	
\def\calN{\mathcal{N}}
\def\sinr{\mathsf{SINR}}
\def\snr{\mathsf{SNR}}
\begin{document}
\graphicspath{{./Figures/}}
\title{Fundamentals of Throughput Maximization with Random Arrivals for M2M Communications
}
\author{Harpreet S. Dhillon, Howard Huang, Harish Viswanathan and Reinaldo A. Valenzuela
\thanks{
H. S. Dhillon is with WNCG, the University of Texas at Austin, USA. Email: dhillon@utexas.edu. H. C. Huang, H. Viswanathan and R. A. Valenzuela are with Bell Labs, Alcatel-Lucent, NJ. Email: \{howard.huang, harish.viswanathan, reinaldo.valenzuela\}@alcatel-lucent.com. A part of this paper is accepted for presentation at IEEE Globecom 2013 in Atlanta, GA~\cite{DhiHuaC2013}. 
}}


\maketitle
\begin{abstract}
For wireless systems in which randomly arriving devices attempt to transmit a fixed payload to a central receiver, we develop a framework to characterize the system throughput as a function of arrival rate and per-user data rate. The framework considers both coordinated transmission (where devices are scheduled) and uncoordinated transmission (where devices communicate on a random access channel and a provision is made for retransmissions). Our main contribution is a novel characterization of the optimal throughput for the case of uncoordinated transmission and a strategy for achieving this throughput that relies on overlapping transmissions and joint decoding. Simulations for a noise-limited cellular network show that the optimal strategy provides a factor of four improvement in throughput compared to slotted aloha. We apply our framework to evaluate more general system-level designs that account for overhead signaling. We demonstrate that, for small payload sizes relevant for machine-to-machine (M2M) communications (200 bits or less), a one-stage strategy, where identity and data are transmitted optimally over the random access channel, can support at least twice the number of devices compared to a conventional strategy, where identity is established over an initial random-access stage and data transmission is scheduled.

\end{abstract}

\section{Introduction \label{sec:intro}}

Machine-to-machine communications, involving communication between a sensor/actuator and a corresponding application server in the network, are expected to be a major part of cellular networks in the near future~\cite{FocM2003,VodM2010,EriM2011,LieCheJ2011}. While there are millions of M2M cellular devices already using second, third and fourth generation cellular networks, the industry expectation is that the number of devices will increase ten-fold in the coming years~\cite{GSMAM2012,EriM2011a}. Additionally, as has been noted previously in~\cite{ShaJiC2012,DhiHuaJ2013}, the M2M traffic is distinct from consumer traffic, which has
been the main driver for the design of fourth generation communication systems, such as LTE. While current consumer traffic is characterized by small number of long lived sessions, M2M traffic involves a large number of short-lived sessions, typically involving transactions of a few hundred bytes~\cite{JouAttC2011,IEEEM2008}. Because of these differences, there is a strong motivation to optimize cellular networks specifically for M2M communications~\cite{3GPPM2010,ZheHuJ2012,GotLioJ2013}.

\subsection{Related Work}
The problem of enabling M2M communication in cellular networks can be viewed from various related prisms depending upon the metric and set of parameters chosen to be optimized, which are briefly discussed below.

First, the short payloads involved in M2M communications make it highly inefficient to establish dedicated bearers for data transmission. Therefore, in some cases it is better to transmit small payloads in the random access request itself~\cite{CheWanC2010}. More formal treatment of this problem in terms of power and energy minimization is done in~\cite{DhiHuaJ2013,DhiHuaC2012}, where maximum load that a base station can serve is characterized for different access strategies. Several modifications in the current communication protocols to reduce signaling overhead~\cite{MarKirC2005,CheYanC2009,CheWanC2010} and power consumption~\cite{ChaCheC2011} have been proposed in the literature. The problem can also be posed as uplink scheduling problem as is done in~\cite{LioAleC2011}, where the knowledge of the exact delay constraint of all the devices is shown to increase the maximum load that can be served at a base station, compared to the case where devices are partitioned into a limited number of classes. 

Second, a significant number of battery powered devices are expected to be deployed at adverse locations such as basements and tunnels, e.g., underground water monitors and traffic sensors, that demand superior link budgets. 
Motivated by this need for increasing link budget for M2M devices, transmission techniques that minimize the transmit power for short burst communication were studied in~\cite{DhiHuaJ2013}. A related problem of coverage and capacity of M2M in specific realm of LTE is studied in~\cite{RatTanC2012}. Third, the increasing number of M2M devices has led to some new ideas on massive access management~\cite{LoLawC2011,HoHuaJ2012,BarHerC2011,TuHoC2011}. One recurrent theme is that of the cooperative design where the devices are appropriately clustered with one device in each cluster responsible for transmitting all the data generated by that cluster to the base station~\cite{HoHuaJ2012,BarHerC2011,TuHoC2011}.

The need to optimize cellular networks for M2M has also been acknowledged by the standards bodies, e.g., see~\cite{3GPPM2011,3GPPM2011a} for the ongoing efforts in 3GPP. However, despite these research efforts, we still lack a systematic framework to understand the fundamental limits of this new communication regime. A key exception is~\cite{DhiHuaJ2013}, which provided insights into energy and power optimal system design for this regime using tools from optimization. Extending this understanding further, we focus on the throughput optimal design as a function of key system parameters. We consider a realistic model for M2M communications in which a Poisson point process governs the requests for new transmissions, with each request involving sending a fixed number of bits to the base station within a fixed maximum delay. The new tools and insights developed in this paper are summarized next.

\subsection{Contributions}

{\em Maximum throughput:} One can envision multiple modes for sending packets in the uplink, which can be categorized into two broad classes: i) {\em uncoordinated}, where packets are sent in a random access fashion, and ii) {\em coordinated}, where devices transmit on contention-free resources assigned by the base station. We develop a comprehensive framework to determine the maximum throughput that can be achieved for a given arrival rate and outage probability for both these classes under a variety of multiple access strategies. Although our framework is more general, for conciseness we focus on the respective optimal strategies for the two classes and the coordinated and uncoordinated frequency division multiple access (FDMA) strategies. The framework involves formulating an optimization problem involving tradeoffs in outage, rate achievable for each device, and the arrival rate at the base station. Since retransmissions are critical for uncoordinated transmissions, we develop a novel analytic approach to explicitly incorporate them in our framework subject to the maximum delay constraint.

{\em Fundamental result for uncoordinated transmission:} There has been relatively little information theory-based analysis that accounts for the noise and interference caused by simultaneous transmissions \cite{EphHajJ1998}. One recent exception is \cite{MinFraJ2012}, which provides a rigorous information theoretic framework and characterizes the achievable rate region within a guaranteed gap of the optimal region. In this paper, we incorporate in a novel way the joint decoding techniques of coordinated multiple access channel to random access and obtain a new fundamental result characterizing the throughput performance of optimal uncoordinated random access transmission using joint decoding. Using information-theoretic techniques, we obtain an outer bound to the performance and then describe a technique to show that the bound is achievable. Our proposal differs from \cite{MinFraJ2012} in two important ways. First, we assume all users transmit with a fixed rate (in bps/Hz) as derived from the fixed payload size (in bits), the time slot duration (in seconds) and bandwidth (in Hz). Second, we incorporate random arrivals, where the number of users with data to transmit in each time slot are characterized by a Poisson process with a given mean. 

{\em One-stage vs. two-stage designs:} 
Using the proposed framework, we evaluate two classes of system-level designs while accounting for the signaling overhead: i) a {\em one-stage} design, where the data payload is communicated over a random access channel, and ii) a conventional {\em two-stage} design, where the identity of devices is established over a random access channel and the payload is communicated during a scheduled stage. Our analysis demonstrates that one-stage design is more efficient when the information payload is small. However, when the payload is large, packet collisions render one-stage design inefficient and thus two-stage design is superior. We characterize the cross-over payload size for a fairly general set of assumptions on the overhead signaling.

\section{System Model \label{sec:sysmod}}

\begin{table}
\centering
\caption{Notation Summary}
\label{table:notationtable}
\begin{tabulary}{\columnwidth}{ |c | C | }
\hline
    \textbf{Notation} & \textbf{Description} \\ \hline
    $W$			& 	Total bandwidth in Hz \\ \hline
    $\ntaus$		& 	Slot duration in secs\\ \hline
    $\lambda$	& 	Rate of new arrivals at the base station \\ \hline
    $x$			& 	Rate of arrivals including retransmissions \\ \hline
    $g_k$		&	The effective channel gain of $k^{th}$ device	\\ \hline
    $P_{\max}$		&	Maximum power constraint	\\ \hline
    $\mu$		&  	Reference $\snr$ \\ \hline
    $M$			&	Number of minislots per slot \\ \hline
    $\ntaum = \ntaus/M$ &	Minislot duration \\ \hline
    $Z$			& 	Total number of transmissions allowed, including both the first transmission attempt and the subsequent retransmissions \\ \hline
    $T_w$		& 	Length of the retransmission window in minislots \\ \hline
    $\nKslot; \nKsslot$ & New arrivals in each slot; the number of these arrivals that eventually succeed \\ \hline
    $\nKmslot; \nKsmslot$ & New arrivals in each minislot; the number of these arrivals that eventually succeed \\ \hline
    $\ntKmslot; \ntKsmslot$ & Total number of arrivals in a minislot; the number of these arrivals that succeed in that minislot \\ \hline
    $\ncalR; \ncalS$		&  	Maximum common rate; maximum throughput \\ \hline
    $\epsilon$; $\delta$ & The one-shot failure probability in a particular slot or minislot; the eventual failure probability after retransmissions ($\epsilon = \delta$ for no retransmissions)\\ \hline
    $\Theta$ 			& Transmission probability in the optimal uncoordinated strategy \\ \hline
    $\nbbZ_{k_1}^{k_2}$ 	& $\{k_1, k_1 + 1, \ldots, k_2\} \subseteq \nbbZ$ for $k_1 \leq k_2$\\ \hline
    $\{x_k\}_{k_1}^{k_2}$ & $\{x_{k_1}, x_{k_1+1}, \ldots x_{k_2}\}$ for $k_1 \leq k_2$\\ \hline
\end{tabulary}

\end{table}

Consider the uplink of a single cell system in which base station is located at the origin and the devices are uniformly distributed around it in a circle of radius $r_o$. The out-of-cell interference is ignored, which is one of the simulation scenarios in 3GPP model~\cite{3GPPM2010}. The devices transmit such that the exogenous arrivals at the base station can be modeled as a Poisson process with rate $\lambda$ arrivals per second. The 3GPP model also makes similar assumptions about incoming M2M traffic in several simulation scenarios, e.g., see Table 2 in~\cite{3GPPM2012b}. Time is divided into slots of duration $\ntaus$ secs and total available bandwidth is $W$ Hz. Each such slot acts as a ``resource slice'' as shown in Fig.~\ref{fig:resourceblock}. The number of new users with data to transmit in any given slot is denoted by $\nKslot$, which is random and varies according to the Poisson distribution with mean $\lambda \ntaus$. For multiple access, we consider both uncoordinated and coordinated transmission strategies. In uncoordinated access, the base station does not play a role in scheduling users, i.e., it neither decides the set of transmitting users nor their exact scheduling over time-frequency resources. On the other hand, in coordinated access the base station allocates resources for each user's transmission. It involves signaling from the base station to each user indicating the set of transmission resources to be used. Clearly, the uncoordinated access does not require any such signaling.

We assume that each user enters the system with a deadline of $\ntaus$ secs, i.e., it transmits over a single time slot and leaves the system upon successful completion of the data transfer. If the transmission is not successful, the packet is dropped and is said to be in outage. In case of uncoordinated transmission, a slightly more elaborate setup is considered where a slot is divided into $M$ minislots of duration $\ntaum = \ntaus/M$ allowing retransmissions with a limit of $Z$ transmission attempts per packet within $M$ minislots. This implicitly enforces a deadline of $\ntaus$ secs and facilitates fair comparison with the coordinated transmission. Since retransmissions are strictly suboptimal for coordinated transmission, the details of retransmissions are intentionally delayed until Section~\ref{sec:uncoordinated} where we study them in the context of uncoordinated transmission. In both the coordinated and uncoordinated transmission, we restrict attention to transmission strategies in which all users transmit at the same average rate $\ncalR$ over the duration $\ntaus$ whenever they transmit. While this restriction is primarily motivated by the need for simplicity in the case of uncoordinated transmissions, we also impose this restriction on the coordinated access strategies to facilitate fair comparison between the two. Although it may be advantageous to have different users transmitting at different rates, such strategies are outside the scope of this paper.

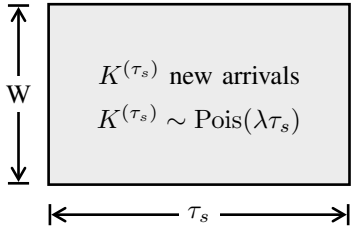
\begin{figure}
\centering
\begin{tikzpicture}[scale=.8,>=angle 60, thick] 
\pgfsetlinewidth{1}	
\coordinate (A) at (0,0);
\coordinate (B) at (5,0);
\coordinate (C) at (5,3);
\coordinate (D) at (0,3);
\coordinate (E) at (2.5,1.5);
\draw [fill=gray!15!,line width = 1] (A) -- (B) -- (C) -- (D) -- cycle;
\node [above] at (E) {$\nKslot$ new arrivals};
\node [below] at (E) {$\nKslot \sim {\rm Pois} (\lambda \ntaus)$};
\draw [|<->|] ($(A)-(.5,0)$) -- ($(D)-(.5,0)$) node [midway,fill=white] {W};
\draw [|<->|] ($(A)-(0,.5)$) -- ($(B)-(0,.5)$) node [midway,fill=white] {$\ntaus$};
\end{tikzpicture}
\caption{A typical time-frequency resource ``slice'' over which $\nKslot$ new arrivals occur.}
\label{fig:resourceblock}
\end{figure}

Each user has a maximum transmission power $P_{\max}$ that is the same for all users. Users may transmit at full power or employ some form of power control and transmit at power levels below the maximum power. For uncoordinated transmission, we assume that the channel state information is not available at the device and hence its transmit power is independent of the channel gain. The uplink channel is modeled as a combination of the power-law path loss, large scale shadowing and small scale fading effects. Therefore, the received power at the base station from a device located at distance $r$ is
\begin{align}
P_r = P_t \ncalX h G r^{-\gamma},
\end{align}
where $P_t$ is the transmit power, $\ncalX$ is a log-normal random variable modeling shadowing gain with standard deviation $\sigma$ dB, $h \sim \exp(1)$ models small scale channel gain due to Rayleigh fading, $G$ is the direction based antenna gain, and $\gamma$ is the path loss exponent. To study the composite effect of all these link budget parameters, we define the reference signal-to-noise ratio ($\snr$) $\mu$ as follows.

\begin{definition}[Reference $\snr$]
The reference $\snr$ $\mu$ is defined as the average received $\snr$ from a device transmitting at maximum power $P_{\max}$ over bandwidth $W$ located at cell edge, i.e., at distance $r_o$ from the base station.
\end{definition}

Therefore, the received $\snr$ $\mu_r$ at the base station from a device located at distance $r \leq r_o$ and transmitting over a bandwidth $W_{\ncalN} \leq W$ can be expressed in terms of $r$ as
\begin{align}
\mu_r = \frac{W}{W_\calN} \frac{P_t}{P_{\max}} \mu \ncalX h \left(\frac{r}{r_o}\right)^{-\gamma},
\end{align}
where the factor $\frac{W}{W_\calN}$ accounts for the difference in the noise power due to the difference in the bandwidths for which $\mu_r$ and $\mu$ are defined. This will be helpful in analyzing the multiple access strategies that involve partitioning of frequency resources, e.g., FDMA. Now defining the effective channel gain as $g = \ncalX h \left(\frac{r}{r_o}\right)^{-\gamma}$, the expression for received $\snr$ $\mu_r$ can be further simplified to
\begin{align}
\mu_r = \frac{W}{W_\calN} \frac{P_t}{P_{\max}} \mu g.
\label{eq:SNR}
\end{align}

For our general discussion throughout the paper, we will assume capacity achieving codes. Please note that the effect of finite block length can be easily incorporated by means of an $\snr$ gap. Interested readers can refer to~\cite{SleJ1963} for more details. 
Under this assumption, for a user transmitting over time $\tau_\ncalN \leq \ntaus$, bandwidth $W_\ncalN \leq W$, and with slight overloading of $\mu_r$ to denote received signal-to-interference-plus-noise ratio ($\sinr$), the rate achieved by the user over a particular time-frequency resource slice can be expressed as
\begin{align}
\ncalR = \frac{\tau_{\ncalN}}{\ntaus} \frac{W_{\ncalN}}{W} \log_2 (1+\mu_r)\ \text{bps/Hz}.
\label{eq:R_def}
\end{align}
The received $\sinr$ depends upon the decoding strategy as discussed in detail in the next two sections. For the special case of FDMA, where a user with channel gain $g$ transmits at $P_{\max}$ over $\frac{W}{B}$ Hz, the rate achieved is
\begin{align}
\ncalR = \frac{1}{B} \log_2\left(1 + \mu_r \right) \stackrel{(a)}{=} \frac{1}{B} \log_2\left(1 + B\mu g \right),
\label{eq:R_FDMAdef}
\end{align}
where the pre-log factor of $\frac{1}{B}$ comes from \eqref{eq:R_def} and $(a)$ follows from \eqref{eq:SNR}. For ease of notation, we denote the set of integers from $k_1$ to $k_2 > k_1$ by $\nbbZ_{k_1}^{k_2} = \{k_1, k_1 + 1, \ldots, k_2\} \subseteq \nbbZ$. Similarly, any general sequence $\{x_{k_1}, x_{k_1+1}, \ldots x_{k_2}\}$ for $k_1 \leq k_2$ is denoted by $\{x_k\}_{k_1}^{k_2}$. The notation used in this paper is summarized in Table~\ref{table:notationtable} for quick reference.

\section{Coordinated Multiple Access \label{sec:coordinated}}
This is the first main technical section of the paper, where we introduce a formal framework to study coordinated multiple access strategies with the goal of maximizing system throughput. Although the framework is general and can be used to study any coordinated transmission strategy, for concreteness we focus on two particular strategies: i) optimal multiuser decoding, which provides a benchmark to evaluate the performance of all other strategies considered in this paper, ii) FDMA, which was shown to be superior than other coordinated multiple access strategies in certain important ways, such as power and energy minimization, in~\cite{DhiHuaJ2013}. Besides, we show that further splitting of time slot into smaller minislots, such as in time division multiple access (TDMA) strategy or hybrid FDMA-TDMA strategy, is strictly suboptimal in the context of throughput maximization under coordinated transmission.

\subsection{Problem Formulation}
We begin this discussion by introducing the main metric of interest for this work, which is the average throughput $\ncalS$, for a given exogenous arrival rate $\lambda$ and a given transmission strategy that captures the average number of successfully transmitted bits per unit time per unit bandwidth. For a slot with $\nKslot$ exogenous arrivals, out of which $\nKsslot$ succeed, the average throughput $\ncalS$ can be expressed as
\begin{align}
\ncalS(\lambda, \mu, \ncalR) = \frac{1}{\ntaus} \nbbE\left[\nKsslot\right] \ncalR,
\label{eq:S_def_cor}
\end{align}
where $\ncalR$ denotes the common rate of each device for the given transmission strategy, $\mu$ is the reference $\snr$ and $\nKslot \sim {\rm Pois}(\lambda \ntaus)$ by assumption. The rest of the arrivals are dropped and the corresponding devices are said to be in outage. Due to the packet deadline of $\ntaus$, these dropped packets cannot be considered for a future transmission. Now assuming $\ncalT_i^{(s)}$ denotes the event that the $i^{th}$ packet, for $i \in \nbbZ_{0}^\nKslot$, succeeds, the number of successful transmissions $\nKsslot$ can be expressed in terms of $\nKslot$ as
\begin{align}
\nKsslot = \sum_{i=1}^\nKslot \nb1 \left( \ncalT_i^{(s)} \right),
\label{eq:nKsslot7}
\end{align}
where $\nb1(\ncalE) = 1$ when event $\ncalE$ occurs and $0$ otherwise. Using \eqref{eq:nKsslot7}, $\nbbE[\nKsslot]$ can be derived as follows
\begin{align}
\nbbE[\nKsslot] &= \nbbE_{\nKslot} \nbbE \left[\sum_{i=1}^\nKslot \nb1 \left( \ncalT_i^{(s)} \right) \big| \nKslot\right] \nonumber\\
&\stackrel{(a)}{=} \nbbE_{\nKslot} \left[\nKslot \nbbP\left(\ncalT_i^{(s)} \right) \right]
\stackrel{(b)}{=} (1-\epsilon) \nbbE[\nKslot],
\label{eq:meanKs_epsilon}
\end{align}
where $(a)$ follows from the linearity of inner expectation, and $(b)$ follows from the outage probability of a packet denoted by $\epsilon$. Thus, the outage probability $\epsilon$ can be expressed in terms of $\nKslot$ and $\nKsslot$ as
\begin{align}
\epsilon = 1 - \frac{\nbbE\left[\nKsslot\right]}{\nbbE\left[\nKslot\right]},
\label{eq:epsilon_c}
\end{align}
using which the average throughput $\ncalS$ can be expressed as
\begin{align}
\ncalS(\lambda, \mu,\ncalR) = \frac{\nbbE[\nKslot]}{\ntaus} \ncalR (1-\epsilon) = \lambda \ncalR (1-\epsilon).
\label{eq:S_c_x}
\end{align}
Given a maximum outage constraint $\epsilon_{\max}$, the throughput maximization problem can now be formulated as
\begin{align}
\begin{array}{cc}
\max\limits_{\ncalR} & \lambda \ncalR (1-\epsilon) \\
{\rm s.t.} & \epsilon \leq \epsilon_{\rm max}
\end{array}.
\label{eq:optimization_c}
\end{align}
We now remark on the solution of this optimization problem.

\begin{remark}[Solution procedure] \label{rem:numprocedure_c}
In addition to the arrival rate and the choice of multiple access strategy, the outage probability $\epsilon$ is also a function of the common rate of the devices. Therefore, for a given arrival rate $\lambda$, there is a maximum common rate $\ncalR$ corresponding to each value of outage probability $\epsilon$. As discussed for optimal and FDMA strategies in this section, it is possible to characterize this relationship analytically. However, the form of this relationship is, in general, such that it does not lead to a closed form analytical result for the maximum throughput. Therefore, we have to resort to the numerical solution for the above optimization problem, which is straightforward once the relationship between the maximum common rate $\ncalR$ and outage probability $\epsilon$ is established. Characterizing this relationship for the optimal and FDMA strategies is the goal of the rest of this subsection.
\end{remark}

To highlight the fact that the outage probability is a function of transmission strategy $\Pi$, arrival rate $\lambda$, reference $\snr$ $\mu$ and common rate $\ncalR$, we let $\ncalE^\Pi(\lambda,\mu,\ncalR)$ denote the outage function as defined by \eqref{eq:epsilon_c}. As remarked above, for a desired outage level $\epsilon$, the maximum common rate $\ncalR^*(\lambda,\epsilon,\mu)$ can be determined as follows
\begin{align}
\ncalR^*(\lambda,\epsilon,\mu) = \arg \max_\ncalR \left[  \ncalE^{\Pi}(\lambda,\mu,\ncalR) \leq \epsilon \right],
\end{align}
for which we need to characterize outage function $\ncalE^\Pi(\lambda,\mu,\ncalR)$, which is done next.

\subsection{Optimal Coordinated Multiple Access}
We first determine the maximum common rate that can be achieved using optimal coordinated transmission by $K$ users with ordered channel gains $\{g_k\}_1^K$ with $g_m \leq g_n$ for $m \leq n$ and reference $\snr$ $\mu$ in a typical resource slice as defined in Section~\ref{sec:sysmod}. The result is given in the following Lemma along with a concise proof.

\begin{lemma}[$\ncalR$ for optimal strategy] \label{lem:R_opt}
The maximum common rate that can be achieved by $K$ users in a typical resource slice with ordered channel gains $\{g_k\}_1^K$ and reference $\snr$ $\mu$ is
\begin{align}
{\widetilde \ncalR}_o \left(\{g_k\}_1^K,\mu\right) = \min_{j\in \nbbZ_{1}^K} \frac{1}{j} \log_2 \left(1+ \mu \sum_{k=1}^j g_k  \right).
\end{align}
\end{lemma}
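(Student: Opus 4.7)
The plan is to recognize this as the common-rate operating point of the $K$-user Gaussian multiple-access channel with channel gains $\{g_k\}_1^K$ and reference $\snr$ $\mu$, and then exploit the ordering of the gains to simplify the minimization over subsets to a minimization over a single index $j$.

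First I would recall the standard MAC capacity region: with capacity-achieving codes, rates $(R_1,\ldots,R_K)$ are jointly decodable at the base station if and only if
\[
\sum_{k\in S} R_k \;\le\; \log_2\!\left(1+\mu \sum_{k\in S} g_k\right) \quad \text{for every nonempty } S\subseteq \nbbZ_{1}^{K}.
\]
This comes directly from the definition \eqref{eq:R_def} of the rate over a slice of duration $\ntaus$ and bandwidth $W$, and the assumption of capacity-achieving codes. Imposing the common-rate constraint $R_k=\ncalR$ for every $k$ turns each inequality into $|S|\,\ncalR \le \log_2(1+\mu\sum_{k\in S} g_k)$, so
\[
\ncalR \;\le\; \min_{\emptyset\ne S\subseteq \nbbZ_{1}^{K}} \frac{1}{|S|}\log_2\!\left(1+\mu \sum_{k\in S} g_k\right).
\]

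Next I would reduce this subset minimization to an index minimization. Since $\log_2(1+\mu\,\cdot)$ is strictly increasing, for every fixed cardinality $j$ the minimizing subset is the one with the smallest sum of gains; by the hypothesis $g_1\le g_2\le\cdots\le g_K$, that subset is exactly $\{1,\ldots,j\}$. Hence
\[
\min_{|S|=j} \frac{1}{j}\log_2\!\left(1+\mu\sum_{k\in S} g_k\right) \;=\; \frac{1}{j}\log_2\!\left(1+\mu \sum_{k=1}^{j} g_k\right),
\]
and taking the outer minimum over $j\in\nbbZ_{1}^{K}$ gives the claimed converse bound on $\widetilde\ncalR_o$.

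For achievability I would invoke standard MAC results: any corner point of the region is attained by successive interference cancellation, and all rate tuples in the region are achievable by time-sharing (or rate-splitting) between corner points. In particular, the equal-rate tuple $(\ncalR,\ldots,\ncalR)$ with $\ncalR$ equal to the minimum above lies in the region by construction and is therefore achievable. I would decode in order from the strongest channel $g_K$ down to $g_1$, so that when user $k$ is decoded, users $1,\ldots,k-1$ are still treated as noise; the worst binding sum-rate constraint corresponds to the weakest $j$ users and is exactly what determines the minimum. The main subtlety worth checking is that the equal-rate point is guaranteed to lie inside the polymatroid region for every ordering of the users, which follows from the fact that the bound was derived by exhausting all subset constraints; this is the step I would be most careful about writing out cleanly, since it is the only place where one could worry about whether the candidate $\ncalR$ actually lies in the capacity region rather than merely satisfying the sum-rate inequalities.
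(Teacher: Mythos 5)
Your proposal is correct and follows essentially the same route as the paper: write down the Gaussian MAC capacity region, impose the common-rate constraint to turn each subset inequality into $|S|\ncalR \le \log_2(1+\mu\sum_{k\in S} g_k)$, and use the ordering of the gains to collapse the minimization over subsets to a minimization over the cardinality $j$. The additional achievability discussion you include (SIC/time-sharing and checking that the equal-rate tuple lies in the polymatroid) is sound but not needed beyond the observation that any tuple satisfying all subset constraints is by definition in the capacity region, which is exactly how the paper's shorter proof treats it.
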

\begin{proof}
The MAC capacity region for the $K$ users with the given channel gains $\{g_k\}_{1}^K$ and reference $\snr$ $\mu$ consists of $K$-dimensional rate vectors such that
\begin{align}
&{\bf R}^{(\MAC)}\left(\{g_k\}_1^K,\mu \right)= \nonumber \\
& \left\{ \ncalR_k \geq 0: \sum_{k \in X} \ncalR_k \leq \log_2
\left(1+ \mu \sum_{k \in X} g_k \right),  X \subseteq \nbbZ_1^K \right\}.
\end{align}
Within this capacity region, the maximum common rate can be found using a brute-force search over all subsets of users $X \subseteq \nbbZ_1^K$
\begin{align}
{\widetilde \ncalR}_o \left(\{g_k\}_1^K,\mu\right)
&= \min_{X \subseteq \nbbZ_1^K} \frac{1}{|X|} \log_2\left(1+ \mu\sum_{k \in X} g_k  \right) \nonumber\\
&\stackrel{(a)}{=} \min_{j\in \nbbZ_1^K} \frac{1}{j} \log_2 \left(1+ \mu
\sum_{k=1}^j g_k  \right),
\end{align}
where $(a)$ follows from the fact that, for a given $j$, the subset $X$ which minimizes the $\snr$ summation consists of the $j$ users with the smallest channel gains.
\end{proof}

The outage function for the optimal coordinated multiple access strategy, termed as $\Pi_{CO}$, can be expressed in terms of the maximum rate $\widetilde{\ncalR}_o$ derived in the above lemma as follows
\begin{align}
&\ncalE^{\Pi_{CO}}(\lambda,\mu,\ncalR) = \min_{\Pi_{CO}}\left[1 - \frac{{\nbbE}_{\nKslot,\left\{g_k\right\}}\left(\nKsslot\right)}{{\nbbE}(\nKslot)} \right] \nonumber \\
&= 1 - \frac{\max_{\Pi_{CO}} {\nbbE}_{\nKslot,\left\{g_k\right\}}\left(\nKsslot\right)}{{\nbbE}(\nKslot)} \nonumber \\
&\stackrel{(a)}{=} 1 - \frac{ {\nbbE}_{\nKslot,\left\{g_k\right\}}\left(\max_{\Pi_{CO}} \nKsslot\right)}{{\nbbE}(\nKslot)} \nonumber \\
&= 1-\frac{ {\mathbb E}_{\nKslot,\left\{g_k \right\}} \left\{ \arg \max\limits_{K}
\left[{\widetilde \ncalR}_o(\{g\}_{\nKslot-K+1}^\nKslot,\mu) \geq \ncalR \right] \right\}}{\lambda \ntaus},
\label{eq:optoutage_c}
\end{align}
where $(a)$ follows from the fact that under the set of coordinated strategies, maximization over the number of successful transmissions $\nKsslot$ will be performed over each realization. With this characterization of the outage function, it is now possible to numerically evaluate the maximum throughput as defined in \eqref{eq:optimization_c} over the space of coordinated multiple access strategies. The results are presented in Section~\ref{sec:numresults}. We now present a similar formulation of the outage function for FDMA with equal bandwidth allocation next.

\subsection{FDMA with Equal Allocation}

As discussed above for the optimal strategy, we first determine the maximum common rate that can be achieved by $K$ users with ordered channel gains $\{g_k\}_1^K$ with $g_m \leq g_n$ for $m \leq n$ and reference $\snr$ $\mu$ assuming: i) bandwidth is partitioned into $B$ equal subbands, and ii) at most one user is scheduled on a given subband. The maximum common rate in this case is by definition the rate achievable by the user with the lowest channel gain $g_1$, as given in the following Lemma.

\begin{lemma}[$\ncalR$ for FDMA with equal allocation] \label{lem:R_fdma}
The maximum common rate achievable by $K$ users in a typical resource slice with ordered channel gains $\{g_k\}_1^K$ and reference $\snr$ $\mu$ is
\begin{align}
{\widetilde \ncalR}_f(\{g_k\}_1^K,\mu,B) &= \frac{1}{B} \log_2(1+B\mu g_1).
\end{align}
\end{lemma}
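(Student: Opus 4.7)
My plan is to prove Lemma 2 by directly reducing the claim to the Shannon formula already derived as equation (7) in the system model, combined with the fact that the common-rate constraint forces us to the bottleneck user. The argument is essentially an extremum-of-ordered-sequence observation rather than a capacity-theoretic construction, so I do not expect any significant technical obstacle.

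First, I would set up the allocation. Assume $B \geq K$ so each user can be assigned a distinct subband (if $B<K$, some users cannot be scheduled under the ``at most one user per subband'' rule, which is a separate outage event handled outside this lemma). Each user transmits on bandwidth $W_{\calN}=W/B$ at full power $P_t=P_{\max}$. Substituting these into the received-SNR expression \eqref{eq:SNR} gives $\mu_r=B\mu g_k$ for user $k$, and then plugging into the rate definition \eqref{eq:R_def} with $\tau_{\ncalN}=\ntaus$ recovers exactly \eqref{eq:R_FDMAdef}: user $k$ achieves individual rate $\frac{1}{B}\log_2(1+B\mu g_k)$.

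Second, I would invoke the common-rate restriction imposed in Section \ref{sec:sysmod}: every scheduled user must transmit at the same rate $\ncalR$. Therefore $\ncalR$ cannot exceed any single user's individual Shannon capacity, giving the upper bound $\ncalR \leq \min_{k\in\nbbZ_1^K}\frac{1}{B}\log_2(1+B\mu g_k)$. Monotonicity of the logarithm and the ordering $g_1\leq g_2\leq\cdots\leq g_K$ then collapse the minimum to the $k=1$ term, yielding $\widetilde{\ncalR}_f(\{g_k\}_1^K,\mu,B)\leq \frac{1}{B}\log_2(1+B\mu g_1)$. Achievability is immediate: if every user simply transmits at this value, each user's rate lies below its own Shannon capacity on its assigned subband, so all transmissions are decodable.

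The only subtlety worth flagging is whether a cleverer FDMA allocation — for instance aggregating multiple subbands for a strong user, or reallocating subbands unequally — could do better under the common-rate constraint. Under equal-bandwidth partitioning with the one-user-per-subband rule, it cannot: the common-rate bottleneck is set by the weakest scheduled user $g_1$, and giving unused subbands to already-strong users leaves $g_1$'s capacity untouched, while taking bandwidth away from user $1$ would only tighten the bottleneck further. Hence the stated value is both an upper bound and achievable, completing the proof.
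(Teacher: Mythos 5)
Your proposal is correct and follows essentially the same route as the paper: identify each user's individual rate from \eqref{eq:R_FDMAdef}, observe that the common rate is the minimum over scheduled users, and use the ordering of the channel gains to collapse that minimum to the $g_1$ term. The extra detail you add on the subband assignment and achievability is harmless elaboration of the same two-line argument.
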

\begin{proof}
The maximum common rate is the rate achieved by the weakest user, i.e.,
\begin{align}
{\widetilde \ncalR}_f(\{g_k\}_1^K,\mu,B) &= \min_{g_i} \frac{1}{B} \log_2(1+B\mu g_i),
\label{eq:Rf_intermed}
\end{align}
which follows from~\eqref{eq:R_FDMAdef}. The result now follows from the ordering of channel gains.
\end{proof}

Using this result, the outage function for FDMA with equal allocation, denoted by $\Pi_{CF}$, can be derived as follows
\begin{align}
&\ncalE_R^{\Pi_{CF}}(\lambda,\mu,\ncalR) = \min_{\Pi_{CF}}\left[1 - \frac{\nbbE_{\nKslot,\left\{g_k\right\}}\left(\nKsslot\right)}{{\nbbE}(\nKslot)} \right] \nonumber \\
&= 1 - \frac{\max_{\Pi_{CF}}\nbbE_{\nKslot,\left\{g_k\right\}}\left(\nKsslot\right)}{{\nbbE}(\nKslot)} \nonumber \\
&\stackrel{(a)}{=} 1 - \frac{\nbbE_{\nKslot,\left\{g_k\right\}}\left(\max_{\Pi_{CF}} \nKsslot\right)}{{\nbbE}(\nKslot)} \nonumber \\
&= 1-\frac{ {\mathbb E}_{\nKslot,\left\{g_k \right\}} \left\{\max\limits_B
\left[{\widetilde \ncalR}_f(\{g\}_{\nKslot-B+1}^\nKslot,\mu,B) \geq \ncalR \right] \right\}}{\lambda \ntaus} \nonumber \\
&\stackrel{(b)}{=} 1-\frac{ {\mathbb E}_{\nKslot,\left\{g_k \right\}} \left\{\max\limits_B
\left[g_{\nKslot-B+1} \geq \frac{2^{B\ncalR}-1}{B\mu} \right] \right\}}{\lambda \ntaus},
\label{eq:fdmaoutage_c}
\end{align}
where $(a)$ follows from the fact that in coordinated transmission, maximization over $\Pi_{CF}$ is performed in each realization, and $(b)$ follows from \eqref{eq:Rf_intermed}. The maximum throughput is now derived numerically using this characterization of the outage function and the results are presented in Section~\ref{sec:numresults}. Before concluding this section, we show that splitting a time slot into smaller minislots as a part of TDMA or hybrid TDMA-FDMA strategies leads to a lower common rate than FDMA, which is the reason why it is not considered in the discussion above.  For fair comparison with FDMA with equal bandwidth allocation, assume that the time slot is divided into $M  \geq 1$ minislots of equal duration and the bandwidth is divided into $B \geq 1$ equal frequency bins, leading to $MB$ time-frequency resource blocks each scheduling at most one user. Note that for $M=1$, this reduces to FDMA with equal bandwidth allocation. Now assume that $\nKsslot = MB$ users, with channel  gains $\{g_k\}_{1}^{\nKsslot}$, are scheduled over $MB$ resource blocks. As discussed in Lemma~\ref{lem:R_fdma}, the maximum common rate, corresponds to the user with weakest channel gain $g_1$. Therefore,
\begin{align}
{\widetilde \ncalR}_{tf} &= \frac{1}{MB} \log_2 (1+B\mu g_1)\nonumber \\
&= \frac{1}{\nKsslot} \log_2 \left(1+\frac{\nKsslot}{M}\mu g_1\right),
\end{align}
which is maximized for $M=1$, thereby showing that splitting a time slot into minislots is not optimal for coordinated transmission. However, this is not the case in uncoordinated transmission, as discussed in detail in the following section.

\section{Uncoordinated Multiple Access \label{sec:uncoordinated}}

This is the second main technical section of this paper where we study the maximum throughput for uncoordinated strategies, or equivalently, transmission strategies over a time-slotted random access channel. Unlike the coordinated transmission discussed in the previous section, it is important to consider retransmissions here. Therefore, before discussing the main problem formulation, we first incorporate retransmissions in the system model with special focus on characterizing the effective arrival rate and deriving the effective failure probability after retransmissions. This forms the first main technical contribution of this section. We propose a novel multiuser detection strategy and establish its optimality for uncoordinated transmission, which forms the second main technical contribution of this section. For fair comparison with the coordinated strategies studied in the previous section, we also consider uncoordinated FDMA with equal bandwidth allocation in which the total bandwidth is partitioned into subbands of equal bandwidth and a transmitter chooses to transmit on a randomly selected subband using the slotted aloha protocol. For the same reason as discussed for coordinated transmission, TDMA and hybrid TDMA-FDMA strategies lead to lower common rate than FDMA, and are hence not considered in this discussion. Another popular random access strategy is code-division-multiple-access (CDMA), which was recently shown to perform better than random access FDMA for transmit power and energy minimization when the channel gains are known at the transmitters~\cite{DhiHuaJ2013}. However, for throughput maximization under no channel state information at the transmitter (CSIT), random access FDMA is known to perform better than random access CDMA~\cite{VisJ2009} because of which we do not consider CDMA in this study. We now introduce the formal setup along with the details of the slot structure and discuss retransmissions in detail in the following subsection.
\begin{figure}[t]
\centering
\includegraphics[width=1\columnwidth]{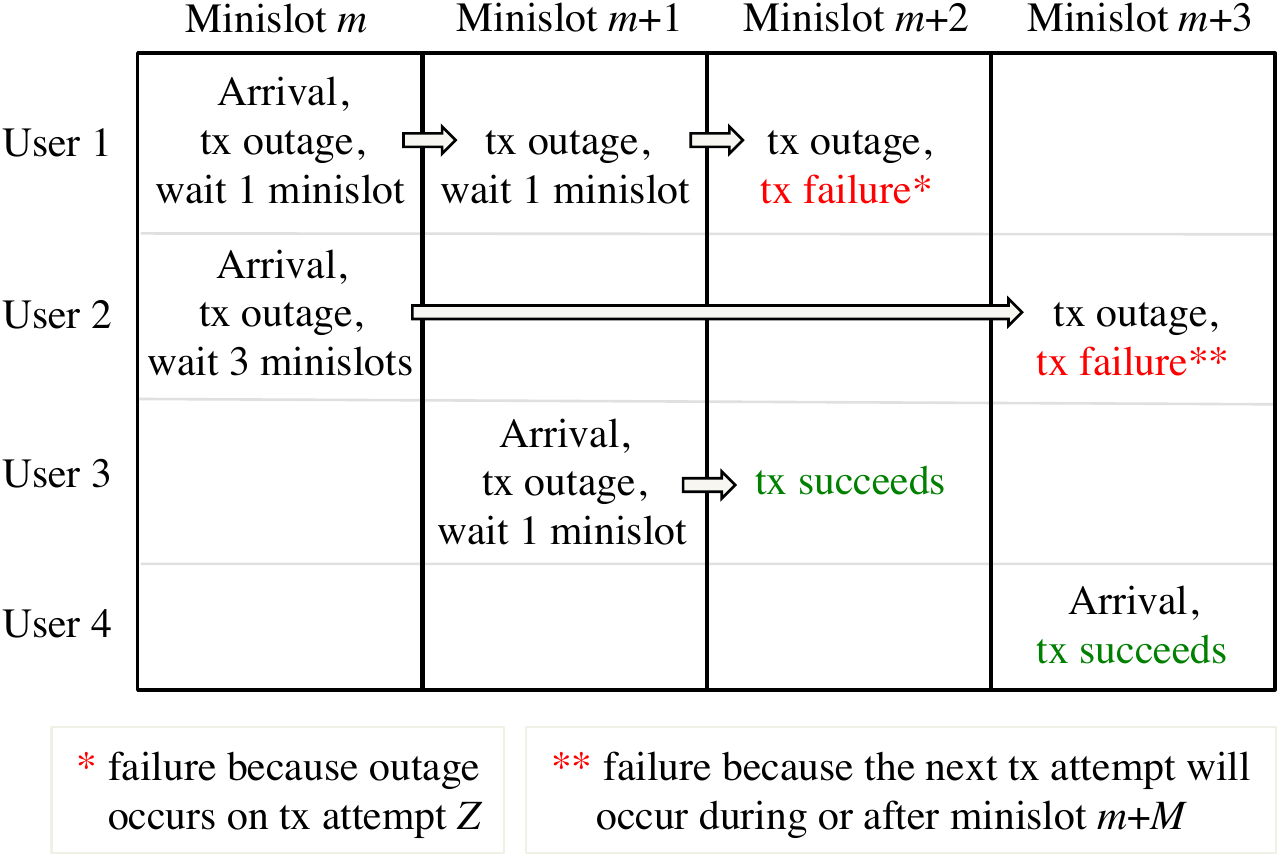}
\caption{An illustration of the retransmission process. The two events leading to packet failures are highlighted for $Z=3$ and $M=4$. See example~\ref{eg:ReTxExample}.}
\label{fig:ReTxExample}
\end{figure}

\subsection{Modeling Retransmissions}
We assume that each slot of duration $\ntaus$ is divided in $M$ minislots with equal duration $\ntaum = \ntaus/M$ as shown in Fig.~\ref{fig:ReTxExample}, where a slot is divided into $4$ minislots. Other details appearing in Fig.~\ref{fig:ReTxExample} will be discussed in Example~\ref{eg:ReTxExample}. As in the previous section, assume that each user comes with a delay constraint of $\ntaus$ secs, i.e., it cannot remain in the system for more than $M \geq 1$ minislots. Therefore, in the rest of this section, $M$ will be used both to denote the number of minislots and the deadline of each user. Further assume that each user is allowed at most $Z$ retransmissions, which also counts the first transmission attempt when the user enters the system for the first time\footnote{The impact of retransmission limit on delay and throughput of preamble contention in the specific setup of LTE-A random access is studied in~\cite{TyaAurJ2012}.}. To facilitate analysis, we make following assumption about the composite arrival process of the new packets and the retransmitted packets.
\begin{assumption}[Composite arrival process] \label{as:arrprocess}
The net arrival process of the new packets and the retransmitted packets in the steady state is approximated by a homogeneous Poisson process with density $x \geq \lambda$. This assumption is reasonably accurate, especially when the back-off times are large and the number of retransmissions is not too large~\cite{AbrC1970}.
\end{assumption}
The back-off time is random and is assumed to be uniformly distributed over a window of $T_w$ minislots, i.e., the retransmission window is $\{1, 2, \ldots, T_w\}$. The uniform distribution for the back-off time is well accepted due to its finite support and reasonably simple implementation~\cite{BiaJ2000,3GPPM2012}. For simplicity, we assume $T_w$ to be a constant, which does not depend upon the retransmission counter. As will be evident from this discussion, more sophisticated strategies, such as the exponential back-off strategy implemented in 802.11 MAC, where $T_w$ is increased exponentially with the retransmission counter, can also be studied with a slight modification of our analysis. In this setup, the packet failure occurs under one of the following two events: i) allowed maximum number of retransmissions are exhausted, or ii) the packet deadline is expired. Denote by $\delta(\lambda, \mu, \ncalR, T_w, M, Z)$ the failure probability, i.e., at least one of the two events listed above occurs. The probability of unsuccessful transmission at a particular attempt is denoted by $\epsilon$, as in the previous section, which is a function of the transmission technique as well as the effective arrival rate after retransmissions. Note that $\delta = \epsilon$ whenever there is only one transmission attempt allowed for each packet, i.e., $Z=M=1$ (and $T_w$ is irrelevant). Note also that the average backoff is $(T_w+1)/2$ minislots, and the average number of transmission attempts is $2M/(T_w+1)$ if we assume each transmission is in outage. To fix these ideas, we consider the following example.

\begin{example}[Retransmissions] \label{eg:ReTxExample}
Consider the simple scenario depicted in Fig.~\ref{fig:ReTxExample}, where $Z=3$ and $M=4$. The minislot index $m$ is with respect to an arbitrary reference. The user indexing is also arbitrary. In minislot $m$, two users arrive in the system. User 1 fails in all the first three minislots, thus exhausting the maximum number of retransmissions and leading to the packet failure. User 2 retransmits after waiting for three minislots but fails. Its deadline is expired, which leads to the packet failure. User 3 arrives in minislot $m+1$. It fails in the first attempt but retransmits and succeeds in the very next minislot. User 4 arrives in minislot $m+3$ and succeeds in the very first attempt. The failure events corresponding to users 1 and 2 contribute towards $\delta$ as discussed next.
\end{example}

Denote by $Y_n$ the number of users with data to transmit in a minislot $n$. The index $n$ counts the number of minislots without differentiating between the slots to which they belong. It is important to note that distinguishing minislots based on the slots from which they are created is not important because of Assumption~\ref{as:arrprocess}. We can now write the following equation required for the equilibrium,
\begin{align}
\nbbE[Y_n] = \lambda\ntaum + \sum_{j< n} \nbbE[Y_{n-j}] p_j \epsilon - \lambda\ntaum \delta,
\end{align}
where $\lambda\ntaum$ is the average number of new arrivals in a minislot and $p_j$ is the probability that the unsuccessful user makes a new transmission attempt after $j$ minislots. In steady state, $\nbbE[Y_n] = x \ntaum$ for all $n$, where $x \geq \lambda$ is the effective arrival rate after retransmissions. Hence the above equation can be rewritten as 
$x = \lambda +  x \epsilon - \lambda \delta$, 
from which the net arrival rate $x$ can be evaluated as
\begin{align}
x = \lambda \frac{1-\delta}{1-\epsilon}.
\label{eq:x_expression}
\end{align}
Since $x \geq \lambda$, it immediately follows that $\delta \leq \epsilon$, i.e., the effective failure probability is always smaller than the per-slot failure probability.

\begin{remark} \label{Rem:Delta_Epsilon}
There are two equivalent ways of formally treating retransmissions in our framework. First is to look at only the exogenous arrivals with the failure probability of each arrival being $\delta$. This is because every new arrival in the system eventually fails with probability $\delta$. Second is to look at the net arrival process, which is a Poisson process with arrival rate $x$, with failure probability of each arrival being $\epsilon$. This equivalence is evident in~\eqref{eq:x_expression} and will be helpful when we formally pose the problem of common rate maximization later in this section.
\end{remark}

We now derive an expression for $\delta$ as a function of $M$, $Z$ and $T_w$. For the derivation, we need the following technical result about the distribution of the sum of discrete uniformly distributed random variables. For the proof of this result, please refer to~\cite{CaiRatC2007}.
\begin{lemma}
\label{lem:UnifSum}
Let $S_n = \sum_{i=1}^n X_i$ be the sum of $n$ discrete i.i.d. uniformly distributed random variables over $\{1, 2, \ldots T_w\}$. The p.m.f. of $S_n$ is given by
\begin{align}
\nbbP[S_n = n+j] = \left(\frac{1}{T_w} \right)^n {n \choose j}_{T_w}, j \in \nbbZ_{0}^{n(T_w-1)},
\end{align}
where the second multiplicative term in the above expression is a polynomial coefficient, i.e., ${n \choose i}_{k+1}$ is the coefficient of $y^i$ in the expansion of $(1+y+\ldots y^k)^n$ $\forall i \in \nbbZ_{0}^{nk}$ and $0$ for all other values of $i$. Another alternate representation of the p.m.f. in terms of Gamma function is
\begin{align}
&\nbbP[S_n = n+j] = \nonumber \\
&\frac{n}{(T_w)^n} \sum_{p=0}^{\lfloor \frac{j}{T_w} \rfloor} \frac
{\Gamma(n+j-pT_w) (-1)^p}
{\Gamma(p+1)\Gamma(n-p+1)\Gamma(j-pT_w+1)}.
\end{align}
\end{lemma}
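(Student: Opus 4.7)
The plan is to use probability generating functions (PGFs), which provide a clean algebraic route for both representations of the p.m.f.

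First I would observe that each $X_i$ has PGF
\begin{align}
G_X(y) \;=\; \nbbE\!\left[y^{X_i}\right] \;=\; \frac{1}{T_w}\bigl(y+y^2+\cdots+y^{T_w}\bigr) \;=\; \frac{y}{T_w}\bigl(1+y+\cdots+y^{T_w-1}\bigr).
\end{align}
Since the $X_i$ are i.i.d., the PGF of $S_n$ is $G_{S_n}(y)=G_X(y)^n = (y/T_w)^n\bigl(1+y+\cdots+y^{T_w-1}\bigr)^n$. Extracting the coefficient of $y^{n+j}$ from $G_{S_n}(y)$ yields $\nbbP[S_n = n+j]$. The factor $(y/T_w)^n$ contributes $(1/T_w)^n$ and shifts the power by $n$, so the quantity to extract from $(1+y+\cdots+y^{T_w-1})^n$ is the coefficient of $y^j$. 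By definition this coefficient is exactly the polynomial coefficient $\binom{n}{j}_{T_w}$, which proves the first representation. The admissible range $j\in\nbbZ_0^{n(T_w-1)}$ is also immediate, because $(1+y+\cdots+y^{T_w-1})^n$ is a polynomial of degree $n(T_w-1)$.

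Next I would derive the Gamma-function representation by rewriting the geometric sum in closed form and expanding:
\begin{align}
\bigl(1+y+\cdots+y^{T_w-1}\bigr)^n \;=\; \frac{(1-y^{T_w})^n}{(1-y)^n}.
\end{align}
The binomial theorem gives $(1-y^{T_w})^n=\sum_{p=0}^{n}\binom{n}{p}(-1)^p y^{pT_w}$, and the negative binomial series gives $(1-y)^{-n}=\sum_{m\geq 0}\binom{n+m-1}{n-1}y^m$. Multiplying these two series and collecting the coefficient of $y^j$, the only terms that contribute have $pT_w\leq j$, i.e.\ $p\leq\lfloor j/T_w\rfloor$. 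The resulting coefficient is
\begin{align}
\sum_{p=0}^{\lfloor j/T_w\rfloor}\binom{n}{p}(-1)^p\binom{n+j-pT_w-1}{n-1}.
\end{align}
Translating the two binomials to Gamma functions via $\binom{n}{p}=\Gamma(n+1)/[\Gamma(p+1)\Gamma(n-p+1)]$ and $\binom{n+j-pT_w-1}{n-1}=\Gamma(n+j-pT_w)/[\Gamma(n)\Gamma(j-pT_w+1)]$, the leading factor $\Gamma(n+1)/\Gamma(n)=n$ factors out of the sum. Multiplying by the overall prefactor $(1/T_w)^n$ then reproduces the claimed Gamma-function p.m.f.

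I expect the main obstacle to be purely bookkeeping: aligning the index shifts between $S_n=n+j$ and the coefficient of $y^j$, and truncating the inclusion-exclusion sum at $p=\lfloor j/T_w\rfloor$ so that the $\Gamma(j-pT_w+1)$ in the denominator is well defined (for $p>\lfloor j/T_w\rfloor$ one would be taking $\Gamma$ at a non-positive integer, but those terms correspond to $\binom{n+j-pT_w-1}{n-1}=0$ anyway, so the truncation is consistent). No deep probabilistic argument is needed beyond the PGF identity; everything reduces to formal power-series manipulation.
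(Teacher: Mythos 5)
Your proof is correct and complete. Note that the paper itself does not prove this lemma at all --- it simply defers to an external reference (Caire et al.), so there is no in-paper argument to compare against; your generating-function derivation is the standard self-contained route and it checks out. The first representation follows immediately from $G_{S_n}(y) = (y/T_w)^n(1+y+\cdots+y^{T_w-1})^n$ together with the paper's definition of ${n \choose j}_{T_w}$ as the coefficient of $y^j$ in $(1+y+\cdots+y^{T_w-1})^n$ (the subscript $T_w$ corresponds to $k = T_w-1$, which you have aligned correctly), and the degree bound gives the support $j \in \nbbZ_0^{n(T_w-1)}$. The second representation follows from $(1-y^{T_w})^n(1-y)^{-n}$, the binomial and negative-binomial expansions, and the identity $\Gamma(n+1)/\Gamma(n) = n$; your bookkeeping of the truncation at $p = \lfloor j/T_w\rfloor$ is right, and the symmetric observation that terms with $p > n$ also vanish (via the pole of $\Gamma(n-p+1)$ in the denominator) makes the stated upper limit consistent even when $\lfloor j/T_w\rfloor > n$. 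No gaps.
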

Using this Lemma, we now derive the failure probability $\delta$. The result is stated in the following theorem and the proof is given in Appendix~\ref{App:Delta}.
\begin{theorem}[Failure probability]  \label{thm:Delta}
The probability that the packet transmission eventually fails due to the deadline expiration or the exhaustion of maximum allowed number of retransmissions is
\begin{align}
\delta &= \epsilon\left(1 - \frac{1}{T_w}\sum_{i=1}^{M - 1} \nb1\left(i \in \nbbZ_1^{T_w}\right) \right) +  \nonumber \\
&\sum_{n=2}^{Z-1} \frac{\epsilon^n}{T_w} \sum_{j=1}^{T_w}
\sum_{k=M - j}^{M - 1} \nb1 \left( k \in \nbbZ_{n-1}^{(n-1)T_w} \right) \frac{1}{T_w^{n-1}} {n-1 \choose k-n}_{T_w} \nonumber \\
&+ \epsilon^Z \sum_{j=0}^{M-Z} \nb1 \left( j \in \nbbZ_{0}^{(Z-1)(T_w-1)}\right) \frac{1}{(T_w)^{Z-1}} {Z-1 \choose j}_{T_w}.
\label{eq:Delta_expression}
\end{align}
\end{theorem}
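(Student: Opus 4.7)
The plan is to partition the failure event into mutually exclusive pieces indexed by the number $n$ of actual transmissions the packet makes before being declared lost. Let $S_k = X_1 + \cdots + X_k$ denote the cumulative backoff after $k$ failed attempts, with $S_0 = 0$ and each $X_i$ i.i.d.\ uniform on $\{1,\ldots,T_w\}$. Placing the arrival minislot at the origin, the $(k{+}1)$-th attempt actually occurs iff $S_k \leq M-1$. Hence, for $1 \leq n \leq Z-1$, the packet is lost by deadline expiry after $n$ failures exactly when the first $n$ attempts all fail (probability $\epsilon^n$, by independence of the backoffs from the per-attempt outage indicators) and $S_{n-1} \leq M-1 < S_n$; while for $n = Z$ the packet is lost by attempt-budget exhaustion exactly when all $Z$ attempts happen within the deadline and all fail, i.e.\ $S_{Z-1} \leq M-1$, contributing probability $\epsilon^Z$. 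This gives
\begin{equation}
\delta = \sum_{n=1}^{Z-1} \epsilon^n\, \nbbP[S_{n-1} \leq M-1 < S_n] + \epsilon^Z\, \nbbP[S_{Z-1} \leq M-1].
\end{equation}

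The $\epsilon^Z$ term of the claim is then immediate from Lemma~\ref{lem:UnifSum}: writing $S_{Z-1} = (Z-1)+j$, the constraint $S_{Z-1} \leq M-1$ becomes $j \in \{0,1,\ldots,M-Z\}$, and the indicator $\nb1(j \in \nbbZ_{0}^{(Z-1)(T_w-1)})$ simply truncates $j$ to the support of the polynomial coefficient $\binom{Z-1}{j}_{T_w}$. The $n=1$ contribution collapses, since $S_0 = 0$, to $\epsilon\,\nbbP[X_1 \geq M]$; writing $\nbbP[X_1 \leq M-1] = \frac{1}{T_w}\sum_{i=1}^{M-1}\nb1(i \in \nbbZ_{1}^{T_w})$ and taking the complement matches the first term of the statement verbatim.

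The only slightly nontrivial piece is the middle sum, $2 \leq n \leq Z-1$. Condition on $S_{n-1}=k$; since $X_n$ is independent of $S_{n-1}$,
\begin{align}
\nbbP[X_n > M-1-k]
&= \frac{1}{T_w}\bigl|\{M-k,\ldots,T_w\}\cap\{1,\ldots,T_w\}\bigr| \nonumber \\
&= \frac{1}{T_w}\sum_{j=1}^{T_w} \nb1(k \geq M-j).
\end{align}
Summing this against $\nbbP[S_{n-1}=k]$ over $k \in \{n-1,\ldots,M-1\}$ and interchanging the order of summation converts the $k$-range into $M-j \leq k \leq M-1$, with the factor $\nb1(k \in \nbbZ_{n-1}^{(n-1)T_w})$ enforcing that $k$ still lies in the support of $S_{n-1}$. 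Substituting Lemma~\ref{lem:UnifSum} (with $n$ replaced by $n-1$) for $\nbbP[S_{n-1}=k]$ produces the polynomial coefficient and the prefactor $1/T_w^{n-1}$, thereby reproducing the middle sum of the theorem.

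The main obstacle is purely bookkeeping: one has to get the joint range constraint $\max(n-1,M-j) \leq k \leq M-1$ correct when swapping the two sums, and line up the indicator functions so that ``$k$ is in the support of $S_{n-1}$'' and ``$j$ is in the support of $S_{Z-1}-(Z-1)$'' end up precisely in the form stated in \eqref{eq:Delta_expression}. Once the three contributions have been isolated, the theorem follows by simple addition.
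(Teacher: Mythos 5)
Your proposal is correct and takes essentially the same route as the paper's proof: both partition the failure event by the number of attempts made before loss, reduce each piece to a statement about $S_{n-1}$ and $X_n$ evaluated via Lemma~\ref{lem:UnifSum}, and assemble $\delta = \sum_{n=1}^{Z}\epsilon^n\,\nbbP[\mathcal{A}_n]$. The only cosmetic difference is that you condition on $S_{n-1}=k$ and tail-bound $X_n$ before interchanging sums, whereas the paper conditions on $X_n=j$ first; the resulting double sum is identical.
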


Recall from the previous section that the outage probability $\epsilon$ depends upon the arrival rate, maximum common rate and the transmission strategy. Under no retransmissions, the arrival rate is simply $\lambda$, which allowed us to formulate outage function $\ncalE^{\pi}(\lambda,\mu,\ncalR)$ for given $\lambda$ and reference $\snr$ $\mu$. However, under retransmissions the net arrival rate $x$ itself is a function of $\epsilon$ as evident from~\eqref{eq:x_expression}. Using the closed form expression derived for $\delta$ in Theorem~\ref{thm:Delta}, we can express $\delta$ in terms of $\epsilon$ in~\eqref{eq:x_expression}. Now expressing $\epsilon$ as the outage function $\ncalE^\Pi(x,\mu,\ncalR)$ to highlight its dependence upon the aggregate arrival rate $x$, \eqref{eq:x_expression} becomes a fixed point equation in terms of $x$. Given the outage function $\ncalE^\Pi(x,\mu,\ncalR)$, it can be iteratively solved to obtain $x$ as a function of $\ncalR$ for a given $\mu$.  Then $\delta$ can be obtained using \eqref{eq:Delta_expression}. The details of this procedure are provided in the following remark along with some comments on the existence and uniqueness of the said fixed point. The outage function for the uncoordinated optimal and FDMA strategies will be characterized later in this section.

\begin{remark}[Procedure to iteratively compute $x$] \label{rem:numx_uc}
We let ${\widetilde x}[i]$ denote the arrival rate $x$ on the $i^{th}$ iteration, where $i\in \nbbZ^+$. It is computed interactively as 
 \begin{equation}
\label{eq:tilde_x_iter}
{\widetilde x}[i+1] =  \Psi({\widetilde x}[i]),
\end{equation}  where, using \eqref{eq:x_expression}, we define
\begin{equation}
\label{eq:g_tilde_x}
\Psi({\widetilde x}) = \lambda \frac{\left(1- \sum_{n=1}^Z a_n \left(\ncalE^\Pi(\widetilde{x},\mu,\ncalR)\right)^n  \right)}{1-\ncalE^\Pi(\widetilde{x},\mu,\ncalR)},
\end{equation}
and the probability $\nbbP[\ncalA_n]$ from the expression of $\delta$ given by~\eqref{eq:Delta_intermed} is denoted by $a_n$ for the ease of notation. Note that by definition $0 \leq a_n \leq 1$ $\forall n \in \nbbZ_1^Z$. By initializing $x[1] = \lambda < \infty$, the desired fixed point is given by $x = \lim_{i\rightarrow \infty}{\widetilde x}[i]$ using (\ref{eq:tilde_x_iter}). The fact that the iterative strategy results in a finite $x$ hinges on the observation that outage $\ncalE^\Pi(x,\mu,\ncalR)$ increases monotonically as the effective arrival rate ${\widetilde x}$ increases. Therefore from \eqref{eq:Delta_expression}, the failure probability $\delta$ also increases monotonically as ${\widetilde x}$ increases because the coefficients $a_n$ are non-negative. Since $\delta({\widetilde x}) \leq \ncalE^\Pi(x,\mu,\ncalR)$ by definition, it follows that $\Psi({\widetilde x})$ in \eqref{eq:g_tilde_x} increases monotonically as ${\widetilde x}$ increases. However, as ${\widetilde x}$ increases without bound, it can be shown using L'Hopital's rule that
\begin{equation}
\lim_{{\widetilde x} \rightarrow \infty} \Psi({\widetilde x}) = \lambda \sum_{n=1}^Z na_n.
\end{equation}
Because $\Psi({\widetilde x})$ increases with ${\widetilde x}$ but is bounded, it follows that the iterations in \eqref{eq:tilde_x_iter} will yield a finite fixed point which is the desired effective arrival rate $x$.
\end{remark}
Note that the iterative procedure explained above to determine net arrival rate $x$ requires outage function $\ncalE^\Pi(x,\mu,\ncalR)$. This function will be characterized both for the uncoordinated optimal and FDMA strategies after formulating the problem of maximizing throughput, which is done next.

\subsection{Maximum Throughput}
As discussed before, the main metric of interest for this work is the average throughput $\ncalS$. Following the equivalence arguments of Remark~\ref{Rem:Delta_Epsilon}, the average throughput can be formulated in two equivalent ways. The first way is to focus only on the new arrivals in each minislot and count the ones that eventually succeed. Since retransmissions are allowed, these arrivals may succeed in a future minislot. For given exogenous arrival rate $\lambda$ and a minislot with $\nKmslot$ new arrivals, out of which $\nKsmslot$ eventually succeed, each with probability $1-\delta$, the average throughput $\ncalS$ can be expressed as
\begin{align}
\ncalS(\lambda, \mu, \ncalR) = \frac{1}{\ntaum} \nbbE[\nKsmslot] \ncalR,
\end{align}
where $\frac{\nbbE[\nKsmslot]}{\ntaum}$ denotes the average number of transmissions that eventually succeed in a unit time. Following the same arguments as in \eqref{eq:S_def_cor}-\eqref{eq:meanKs_epsilon}, we can express $\nbbE[\nKsmslot] $ as a function of $\delta$ and $\nbbE[\nKmslot]$ as
\begin{align}
\nbbE[\nKsmslot] = (1-\delta) \nbbE[\nKmslot],
\end{align}
using which the average throughput can be expressed as
\begin{align}
\ncalS(\lambda, \mu, \ncalR) &= \frac{\nbbE[\nKmslot]}{\ntaum}  \ncalR (1-\delta)
 = \lambda \ncalR (1-\delta).
\end{align}
Using \eqref{eq:x_expression}, it can also be equivalently expressed in terms of the aggregate arrival rate $x$ as
\begin{align}
\ncalS(\lambda, \mu, \ncalR) = x \ncalR (1-\epsilon).
\label{eq:S_uc_x}
\end{align}
Interestingly, the above expression can be directly derived if we focus on the aggregate arrivals in each minislot instead of just the new arrivals. In this case, for a given aggregate arrival rate $x$, total number of arrivals in a given minislot $\ntKmslot$, out of which $\ntKsmslot$ succeed in that minislot, the average throughput can be expressed as
\begin{align}
\ncalS(\lambda, \mu, \ncalR) = \frac{1}{\ntaum} \nbbE\left[\ntKsmslot\right] \ncalR.
\end{align}
As done for the new arrivals above, $\ntKsmslot$ can be expressed in terms of $\ntKmslot$ using the fact that each transmission succeeds in that minislot with probability $1-\epsilon$. The expression is
$\nbbE \left[ \ntKsmslot \right] = \nbbE \left[ \ntKmslot \right] (1-\epsilon)$, 
which directly leads to \eqref{eq:S_uc_x}. As a byproduct, $\epsilon$ can be expressed in terms of $\ntKmslot$ and $\ntKsmslot$ as
\begin{align}
\epsilon = 1 - \frac{\ntKsmslot}{\ntKmslot},
\label{eq:epsilon_uc}
\end{align}
which will be useful in formulating the outage function for the optimal and FDMA strategies later in this section. Now an optimization problem similar to~\eqref{eq:optimization_c} can be formulated to maximize the throughput given an outage constraint $\delta_{\max}$. Recall that for no retransmission case, $\delta=\epsilon$, which is why the outage constraint in \eqref{eq:optimization_c} was given in terms of $\epsilon_{\max}$. The optimization problem is
\begin{align}
\begin{array}{cc}
\max\limits_{\ncalR} & \lambda\ncalR(1-\delta) \\
{\rm s.t.} & \delta \leq \delta_{\rm max}
\end{array}.
\label{eq:optimization_uc}
\end{align}
The general numerical procedure to evaluate maximum throughput is the same as for the coordinated transmissions discussed in Remark~\ref{rem:numprocedure_c} but includes an additional step of evaluating the aggregate arrival rate as described in Remark~\ref{rem:numx_uc}. The complete procedure is briefly summarized below.

\begin{remark}[Solution procedure] \label{rem:numprocedure_uc}
To solve the optimization problem given by \eqref{eq:optimization_uc}, we need to find the failure probability $\delta$ as a function of the common rate $\ncalR$ for a given exogenous arrival rate $\lambda$. This can be achieved in three steps. First step is to determine aggregate arrival rate $x$ for given $\ncalR$, $\lambda$ and a given outage function $\ncalE^{\Pi}$ using the procedure described in Remark~\ref{rem:numx_uc}. Second step is to determine the value of the outage function $\ncalE^{\Pi}$ (say $\epsilon$) at this aggregate arrival rate $x$. For this $\epsilon$, the failure probability $\delta$ can be determined using \eqref{eq:Delta_expression}. The maximum throughput can now be evaluated numerically.
\end{remark}

We now characterize outage function $\ncalE^\Pi(x,\mu,\ncalR)$ for both the optimal and the FDMA strategies.

\subsection{Optimal Uncoordinated Multiple Access}
Using \eqref{eq:epsilon_uc}, the outage function for the optimal uncoordinated strategy can be expressed as
\begin{align}
\ncalE^{\Pi_{UO}}(x,\mu,\ncalR) &= \min_{\Pi_{UO}}\left[1 - \frac{\nbbE_{\ntKmslot,\left\{g_k\right\}}\left(\ntKsmslot\right)}{\nbbE\left(\ntKmslot\right)} \right]   \nonumber \\
&= 1 - \frac{\max_{\Pi_{UO}} \nbbE_{\ntKmslot,\left\{g_k\right\}}\left(\ntKsmslot\right)}{\nbbE\left(\ntKmslot\right)}\nonumber \\
&= 1 - \frac{\max_{\Pi_{UO}} \nbbE_{\ntKmslot,\left\{g_k\right\}}\left(\ntKsmslot\right)}{x\ntaum}.
\label{eq:optoutage_uc}
\end{align}
The goal now is to derive $\nbbE\left[\ntKsmslot\right]$, which involves a novel multiuser decoding strategy as discussed in the next theorem. Please note that the proposed strategy is described in the achievability part of the proof. Also note that even though this theorem is presented in the context of a minislot, the result is general and applies to any given time-frequency resource block with a given arrival rate. For notational simplicity, we denote the maximum common rate per minislot by $\ncalR_m = M \ncalR$ in the following result.

\begin{theorem} \label{thm:EK_uc}
The mean number of users that succeed in a given minislot in the optimal uncoordinated transmission strategy is
\begin{align}
\nbbE\left[\ntKsmslot\right] = \max_{\Theta} x \ntaum \Theta \nbbE_{\Omega} \left[\max_{|\ncalL|}\frac{|\ncalL|}{\Omega} \nbbP \left(\ncalC(P_{\max}) \right)\right],
\end{align}
where $\Omega\sim {\rm Pois}(x \ntaum \Theta)$ and the event $\ncalC(P_{\max})$ is
\begin{align}
\left\{ |{\widetilde \ncalL}|\ncalR_m \leq  \log_2 \left( 1 + \frac{\mu \sum_{i \in {\widetilde \ncalL}} g_{i}}{1+\mu \sum_{m \in \ncalT-\ncalL}g_{m}}\right), \forall {\widetilde \ncalL} \subseteq \ncalL \right\}.
\end{align}
\end{theorem}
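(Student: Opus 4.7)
The plan is to establish the theorem by a matching outer bound and achievability, in three logical steps.

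\emph{First}, I would reduce any uncoordinated, CSIT-free access policy in a single minislot to a symmetric Bernoulli rule with a single parameter $\Theta \in [0,1]$. Because users cannot coordinate their decisions and do not observe their channel gains, each user faces the same stochastic decision problem with the same information, so by symmetry the policy space collapses to one probability of accessing the minislot. Applying this Bernoulli thinning to the composite Poisson arrival process of rate $x$ from Assumption~\ref{as:arrprocess} yields an active set whose cardinality is $\Omega \sim {\rm Pois}(x\ntaum\Theta)$, with i.i.d.\ channel gains independent of $\Omega$. The outer maximisation over $\Theta$ in the theorem then corresponds to optimising this parameter at the end.

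\emph{Second}, for a fixed active set $\ncalT$, I would identify the best receiver action. Since every user is constrained to the common per-minislot rate $\ncalR_m = M\ncalR$, the subsets $\ncalL \subseteq \ncalT$ that can be jointly decoded while treating $\ncalT \setminus \ncalL$ as Gaussian interference are precisely those lying inside the Gaussian sub-MAC capacity region associated with $\ncalL$; this is exactly the defining condition of the event $\ncalC(P_{\max})$. The outer bound is the standard Gaussian MAC converse (Fano plus the chain rule) applied to $\ncalL$; the achievability follows from i.i.d.\ Gaussian codebooks at rate $\ncalR_m$ and joint typicality decoding on the subset $\ncalL$ chosen by the receiver after activity detection. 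Hence the number of users decoded in a realisation is $\max_{\ncalL \subseteq \ncalT} |\ncalL|\,\nb1(\ncalC(P_{\max}))$, and by exchangeability of the gains only the size $|\ncalL|$ matters once one picks the best subset of each given size.

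\emph{Third}, I would convert the per-realisation count into the stated expression via a Palm-type argument for the Poisson process of transmitters. Writing $\ntKsmslot = \sum_{u \in \ncalT} \nb1(u \in \ncalL^\star)$ with $\ncalL^\star$ the decoder's optimal subset, Campbell's formula gives
\begin{equation*}
\nbbE[\ntKsmslot] \;=\; \nbbE[|\ncalT|]\cdot\nbbP\bigl(\text{tagged transmitter is decoded}\bigr) \;=\; x\ntaum\Theta\cdot\nbbP\bigl(\text{tagged transmitter is decoded}\bigr),
\end{equation*}
where the probability is computed under the Palm distribution. By Slivnyak's theorem the remaining transmitters under Palm again form a fresh ${\rm Pois}(x\ntaum\Theta)$, giving the $\Omega$ distribution in the theorem statement. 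Conditional on $\Omega$ and on the decoder committing to a subset of size $|\ncalL|$, exchangeability of the gains makes the tagged user lie in that subset with probability $|\ncalL|/\Omega$, and the event $\ncalC(P_{\max})$ then decides success. Pulling $\max_{|\ncalL|}$ inside the expectation (by selecting the best subset per realisation) and finally maximising over $\Theta$ yields the claimed identity.

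The main obstacle will be the second stage: making the MAC converse watertight in a random-access setting where the decoder discovers both the active set and the gains only upon reception. A careful treatment must justify that activity detection together with the common-rate restriction reduces the problem to a family of Gaussian sub-MAC channels, and that no non-Gaussian signalling, rate-splitting, or cross-minislot successive-cancellation scheme can do better once the common-rate and per-minislot deadline constraints are enforced. The remaining ingredients --- the Bernoulli reduction of stage one and the Campbell/Slivnyak computation of stage three --- are essentially bookkeeping once stage two is in place.
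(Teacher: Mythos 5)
Your overall architecture (Bernoulli thinning with parameter $\Theta$, a sub-MAC decodability condition defining $\ncalC(P_{\max})$, and a size-biasing argument producing the factor $|\ncalL|/\Omega$) matches the paper's converse quite closely, and your Campbell--Slivnyak route to the final expectation is a legitimate, arguably cleaner, alternative to the paper's manipulation $\Omega=\sum_n \widetilde{s}(n)$ followed by factoring out $\nbbE[\sum_n\widetilde{s}(n)]=x\ntaum\Theta$. Two smaller omissions on the converse side: you never argue that transmitting at full power is optimal --- the paper needs the explicit monotonicity of $\sum_{i\in\ncalL}\frac{(P/P_{\max})\mu g_i}{1+\sum_m (P/P_{\max})\mu g_m}$ in a common power $P$ to justify replacing $\ncalC(\{P_n\})$ by $\ncalC(P_{\max})$; and you should check that your Palm computation, in which the population seen by the tagged transmitter is $1+{\rm Pois}(x\ntaum\Theta)$, really reproduces the stated $\nbbE_\Omega[\,\cdot\,]$ with $\Omega\sim{\rm Pois}(x\ntaum\Theta)$ rather than a shifted version.

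The genuine gap is in your achievability step, which you dismiss as following from ``i.i.d.\ Gaussian codebooks \ldots after activity detection.'' In the uncoordinated setting the receiver does not know which users are active, so it cannot associate a received codeword with a pre-assigned per-user codebook; the whole difficulty is that codebook identification must itself be accomplished without coordination. The paper resolves this with a specific construction: a finite pool of $\bar{N}$ MAC codebooks from which each active user picks one uniformly at random, each prefixed by a unique Gaussian preamble of length $q$; one must then bound three separate error events --- the probability $P_c$ that two active users collide on the same codebook (controlled by choosing $\bar{N}$ large relative to a high-probability bound $\Omega_{\max}$ on $\Omega$), the preamble detection/false-alarm probability $P_d$ (controlled by truncating the gain distribution to $[g_{\min},g_{\max}]$ and letting $q$ grow), and the usual joint-typicality decoding error $P_e$ --- and show each vanishes without eating into the rate. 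You also locate the ``main obstacle'' in the converse, but the converse is a routine chain of inequalities once the strategy class is fixed; the codebook-collision and detection analysis is the missing idea your plan would need to supply before the upper bound in the converse can be declared achievable.
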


The proof of this theorem involves both converse and achievability parts, discussed in detail below.
\begin{proof}
{\bf (Converse)} To prove the converse, consider any strategy $\Pi_U$ for uncoordinated transmission that governs: i) the decision to transmit, and ii) the power level of a user which has data to transmit in any given minislot independent of total number of users $\ntKmslot$ in that minislot and their channel gains $\{g_{k}\}_{1}^\ntKmslot$. For simplicity, we drop the superscript and denote the number of users (aggregate arrivals) by $\ntKmslots$. Similarly, we drop the superscript from $\ntKsmslot$ and denote the number of successful transmissions in a minislot by $\ntKsmslots$. 
Let ${\widetilde s}(n)$ be the indicator for whether user $n$ transmits, i.e., ${\widetilde s}(n)=1$ when user $n$ transmits and ${\widetilde s}(n)=0$ otherwise. Let $\nbbE[{\widetilde s}(n)] = \nbbP({\widetilde s}(n)=1) = \Theta$. Let $P_{n} \leq P_{\max}$ be the transmit power of user $n$ that is transmitting. Note that for any uncoordinated transmission strategy, ${\widetilde s}(n)$ and $P_{n}$ are independent of $\ntKmslots, \{g_{k}\}_{1}^\ntKmslots$ and independent of each other. Denote the set of transmitting users by $\ncalT$, i.e.,
\begin{align} \ncalT = \left\{ n: { \widetilde s }(n)=1 \right\}.
\end{align}
Denote the common rate by $\ncalR$. Since the transmissions are confined over a minislot of duration $\ntaus/M$, the required rate to achieve common rate $\ncalR$ is $\ncalR_m = M\ncalR$. Please refer to~\eqref{eq:R_def} for more details.
Now suppose a subset $\ncalL \subset \ncalT$ of users' messages are successfully decoded by the receiver. Then from the multiple access channel (MAC) channel theorem we require that
\begin{align}
 l \ncalR_m \leq  \log_2 \left( 1 + \sum_{i \in {\widetilde \ncalL}}\frac{ \frac{P_{i}}{P_{\max}}\mu g_{i}}{1+\sum_{m \in \ncalT-\ncalL}\frac{P_{m}}{P_{\max}}\mu g_{m}}\right), \forall {\widetilde \ncalL} \subset \ncalL,
 \end{align}
where $l=|{\widetilde \ncalL}|$. For ease of notation, we denote this event by $\ncalC (\{P_n\},\{g_n\},\mu, \ncalL,\ncalT)$. Recall that the same concept was also used in the proof of Lemma~\ref{lem:R_opt}.
In the rest of the proof, we will use the short hand notation $\ncalC(\{P_n\})$, with the understanding that it does depend upon other parameters but they are not important for this discussion. The average number of users that are successfully decoded can now be upper bounded by
\begin{align}
\nbbE[\ntKsmslots] &\leq  \max_{\{P_{n}\}}{\mathbb E}_{\ntKmslots, {\widetilde s}(n)}
 {\mathbb E}_{\{g_{n}\}}\left[\max_{|\ncalL|} |\ncalL| \nb1 \left( \ncalC (\{P_n\}) \right) \big| \ntKmslots,\{{\widetilde s}(n)\} \right] \nonumber \\
& \stackrel{(a)}{\leq}  \max_{P} \nbbE\left[ \nbbE \left[\max_{|\ncalL|}|\ncalL| \nb1 \left(\ncalC(P) \right) \big| \ntKmslots, \{{\widetilde s}(n)\}\right] \right]\nonumber \\
& \stackrel{(b)}{\leq}  \nbbE \left[ \nbbE \left[\max_{|\ncalL|}|\ncalL| \nb1 \left(\ncalC(P_{\max}) \right) \big| \ntKmslots, \{{\widetilde s}(n)\}\right] \right]\nonumber \\
& \stackrel{(c)}{=}  \nbbE  \nbbE \left[\frac{\sum_{n=1}^{\ntKmslots}{\widetilde s}(n)}{\Omega} \max_{|\ncalL|}|\ncalL| \nb1 \left(\ncalC(P_{\max}) \right) \big| \ntKmslots, \{{\widetilde s}(n)\}\right]\nonumber \\
& \stackrel{(d)}{=} \nbbE \left[\sum_{n=1}^{\ntKmslots} {\widetilde s}(n)  \right] \nbbE_{\Omega, \{g_n\}} \left[\max_{|\ncalL|}\frac{|\ncalL|}{\Omega} \nb1 \left(\ncalC(P_{\max}) \right)\right]\nonumber \\
&= x \ntaum \Theta \nbbE_{\Omega} \left[\max_{|\ncalL|}\frac{|\ncalL|}{\Omega} \nbbE_{\{g_n\}} \nb1 \left(\ncalC(P_{\max}) \right)\right]\nonumber \\
&= x \ntaum \Theta \nbbE_{\Omega} \left[\max_{|\ncalL|}\frac{|\ncalL|}{\Omega} \nbbP \left(\ncalC(P_{\max}) \right)\right],
\label{eq:converse-rateconstraint}
\end{align}
where $(a)$ follows from the fact that the transmit power is independent of all the other variables, including channel gains of the respective user, which means that ``optimal'' value will be the same for all the users, $(b)$ follows from the fact that the function 
$\sum_{i \in \ncalL}\frac{\frac{P}{P_{\max}}\mu g_{i}}{1+\sum_{m \in {\widetilde \ncalT}-\ncalL}\frac{P}{P_{\max}} \mu g_{m}}$  
is an increasing function of $P$ for given set of channel gains, $(c)$ follows by simply setting $\Omega=\sum_{n=1}^{\ntKmslots}{\widetilde s}(n)$, which is a Poisson distributed random variable with mean $x \ntaum \Theta$, $(d)$ follows from the independence of ${\widetilde s}(n)$ from all other random variables. This completes the proof of the converse.

{\bf (Achievability)} We now present the achievability proof, i.e., show that the above upper bound on the average number of successfully decodable users is achievable as the number of channels symbols per slot goes to infinity. First, pick a set of $\bar{N}$ code books for the AWGN MAC channel, each with $2^{\ncalR_m}$ code words of length ${\bar n}$.  The code words can be picked at random from the typical set of Gaussian random input distribution with transmit power $P_{\max}$ as is usually done in the proof of the MAC channel coding theorem \cite{CovThoB1991}. Observe that the code books for each user are selected independently of the other users and the $\snr$. Furthermore, the code book is specifically not dependent on the channel gain except for the number of code words, which is governed by the rate $\ncalR_m$. Thus a combination of such independently chosen code books can form the code book of the MAC channel.

At the beginning of each transmission, a preamble Gaussian sequence of length $q$ that is unique to that code book is transmitted to help the receiver detect which code books are being transmitted. All the code words of the same code book will have the same sequence while the different code books will have distinct sequences. The sequences can be drawn at random from a Gaussian random variable with variance $P_{\max}$.

As before, suppose there are $\ntKmslots$ users who have data to transmit in a given slot. In our strategy, each user will decide to transmit independently with probability $\Theta$. Recall that the total number of transmitting users is denoted by $\Omega$. When a user decides to transmit, it will pick one of the code books from $\bar{N}$ code books at random and map the message to one of the code words in the chosen code book in the usual way and first transmit the preamble followed by the code word.

The collision probability $P_{c}$ that two or more transmitting users pick the same code book when $\Omega$ users attempt to transmit is given by
\begin{align}
P_{c}(\Omega) = 1 - \left(1- \frac{1}{\bar{N}} \right)^{\Omega-1},
\end{align}
which is proved in Lemma 1 of~\cite{DhiHuaJ2013}.
Since $\Omega$ is a Poisson random variable there exists $\Omega_{\max}$ such that $\nbbP\{ \Omega > \Omega_{\max} \} < \delta'/2$. By picking $\bar{N}$ large enough so that $P_{c}(\Omega_{\max}) \leq \delta'/2$, we can ensure that the overall collision probability $P_{c} \leq \delta'$.

Now consider only the case when transmitting users have picked distinct code books. We propose a decoding strategy in which the receiver first detects which code books are in use, i.e, code words from which code books are transmitted in that slot, and then proceed to decode the code words. With a simple correlator detector, the receiver can detect the presence of each of the possible code books. The total detection error, i.e., detecting positively a code book that is not used or missing a code book that is in use, can be bounded by
\begin{align}
P_{d} \leq &\bar{N} \exp\left(-q \Phi \left( \frac{\mu g_{\min}}{1 + (\bar{N}-1)\mu g_{\max}} \right)\right)+ \nonumber \\
&\bar{N} \left(\nbbP\left( g < g_{\min} \right) + \nbbP\left( g > g_{\max} \right) \right),
\end{align}
where we bounded the individual detection error probabilities assuming worst case channel scenario where the interferers have some large channel gain while the desired signal being detected has a small channel gain. Observe that it is possible to first pick $g_{\min}$ and $g_{\max}$ so that $\bar{N} \left( \nbbP\left( g < g_{\min} \right) + \nbbP\left( g > g_{\max} \right) \right) \leq \delta''/2$ and then pick $q$ sufficiently large so that $P_{d} \leq \delta''$. Furthermore, since $q$ is fixed relative to block length, the loss in capacity because of the preamble can be made arbitrarily small. The resulting loss in transmission rate can thus be made negligible. Therefore, we can assume that the receiver knows which code books are in use. Note that since the channels are static the receiver can similarly first estimate the channel accurately from the preamble and then apply the joint typicality test for the known channels. We will assume that the channel is estimated without errors observing that the increase in decoding error probability because of channel estimation errors can be shown to be arbitrarily small when $q$ goes to infinity.

Once the channels are determined the receiver tries to evaluate which subset of code books from the detected code books can actually be decoded treating the rest as noise. To this end, the receiver can compute the rate region for decoding all possible subsets of code books detected and pick the largest set of users for which the rate $\ncalR_m$ is achievable. In other words, the receiver can find the subset $\ncalL$ of the set of code books $\ncalT$ detected to be in use. Thus, the receiver will attempt to decode the maximal set of code words from the code books for which
\begin{eqnarray}
|{\widetilde \ncalL}| \ncalR_m \leq \log_2 \left( 1 + \sum_{i \in {\widetilde \ncalL}}\frac{\mu g_{i}}{1+\sum_{m \in \ncalT-\ncalL}\mu g_{m}} \right), \forall {\widetilde \ncalL} \subset \ncalL. \label{eq:rate constraint}
 \end{eqnarray}

From the MAC channel theorem, we know that whenever the above condition is satisfied then the decoding error, \begin{align} P_{e} \leq \delta'''({\bar n}) \end{align} where $\delta''' \rightarrow 0$ as ${\bar n} \rightarrow \infty$.

Thus as ${\bar n} \rightarrow \infty$, the combined error probability $P_{c}+P_{d}+P_{e} \rightarrow 0$ whenever the rate constraint (\ref{eq:rate constraint}) is met. Thus the bound on RHS of (\ref{eq:converse-rateconstraint}) is achievable.
\end{proof}


\subsection{Uncoordinated FDMA with Equal Allocation}
Using \eqref{eq:epsilon_uc}, we now derive the outage function for uncoordinated FDMA with equal bandwidth allocation. We assume that the users choose one of the $B$ subbands randomly. Since the aggregate arrivals are modeled as Poisson and each user chooses a subband randomly, the arrival process in each subband can also be modeled as Poisson with an appropriately scaled arrival rate~\cite{KinB1993}. If a subband is chosen by more than two users, the transmission of all those devices is assumed to be unsuccessful. Equivalently, each device chooses to transmit on a randomly selected subband using the slotted aloha protocol. Due to this, we will henceforth refer to uncoordinated FDMA as aloha FDMA. For a given common rate the transmission is successful only when the following two conditions are successful: i) the user under consideration is the only one to choose a particular subband, and ii) its channel is sufficiently strong to achieve common rate $\ncalR$ over the chosen band. For notational simplicity, we specialize the expression of common rate $\widetilde{\ncalR}_f$ derived in Lemma~\ref{lem:R_fdma} for a single user with channel gain $g_k$ as follows:
\begin{align}
\widetilde{\ncalR}_f(g_k, \mu, B) = \frac{1}{B} \log_2 (1+B\mu g_k).
\label{eq:R_f_uc}
\end{align}
Denoting by $\Pi_{UF}$ the set of aloha FDMA strategies, the outage function can now be derived as follows:


\begin{align}
&1 - \ncalE^{\Pi_{UF}}(x,\mu,\ncalR) 
 = \frac{\max_{\Pi_{UF}} {\nbbE}_{\ntKmslot,\left\{g_k \right\}}\left(\ntKsmslot\right)}{x \ntaum} \nonumber \\
&\stackrel{(a)}{=} \frac{\max_B {\nbbE}_{\ntKmslot,\left\{g_k \right\}} \left\{ {\widetilde B} \times {\nb1}\left[{\widetilde \ncalR}_f(g_k,\mu,B) \geq \ncalR \right] \right\}}{x \ntaum} \nonumber \\
&\stackrel{(b)}{=} \frac{\max_B \left\{ {\mathbb E}_{\ntKmslot} [{\widetilde B}] \times {\mathbb E}_{\{g_k\}}{\nb1}\left[{\widetilde \ncalR}_f(g_k,\mu,B) \geq \ncalR\right]\right\}}{x \ntaum}\nonumber \\
&\stackrel{(c)}{=} \frac{\max_B  \left\{\lambda \exp(-\lambda/B) \times \nbbP \left[{\widetilde \ncalR}_f(g_k,\mu,B) \geq \ncalR \right]  \right\}}{x \ntaum}\nonumber \\
&\stackrel{(d)}{=} \frac{\max_B  \left\{\lambda \exp(-\lambda/B) \times \nbbP\left[g_k \geq \frac{2^{B\ncalR}-1}{B\mu}\right]  \right\}}{x \ntaum}
\label{eq:fdmaoutage_uc}
\end{align}
where ${\widetilde B}$ in $(a)$ is the number of bins with exactly one arrival, which is a function of total number of arrivals in the minislot $\ntKmslot$ and the number of partitions $B$, $(a)$ follows from the fact that for a successful transmission, the user should arrive in one of the ${\widetilde B}$ bins and should have a strong enough channel to achieve rate $\ncalR$, $(b)$ follows from the independence of these two events, $(c)$ follows from the fact that the number of arrivals in each subband is Poisson distributed as stated above, and $(d)$ follows from \eqref{eq:R_f_uc}.

\section{Numerical Results
} \label{sec:numresults}


For the numerical results, we assume that the reference $\snr$ is $\mu = 0$dB. This would be obtained, for example, with a device transmitting with 10dBm (10mW) power over 1MHz bandwidth, a noise power spectral density of -174dBm/Hz, a receiver noise figure of 5dB, a receiver antenna gain of 14dB, a 3.76 pathloss exponent, a 128dB pathloss intercept at 1000m, and a cell radius (reference distance) of 1360m \cite{HuaPapB2012}. Assume $\ntaus=1$sec. In Figs.~\ref{fig:OutVsRate} through \ref{fig:TpVsLam}, we consider coordinated optimal, coordinated FDMA, uncoordinated optimal, and aloha FDMA (uncoordinated FDMA) transmission strategies. For uncoordinated transmission, we consider   three cases characterized by the parameter sets $\{M=1,Z=1\}$, $\{T_w = 5, M=10, Z=10\}$, and $\{T_w = 5, M=20, Z=10\}$. The first case correspond to a single transmission (i.e., no retransmissions). For the second and third cases, if every transmission is in outage, the average number of transmissions is $2M/(T_w+1)$. So the second case corresponds to approximately 4 transmissions, and the third case corresponds to approximately 8 transmissions. We refer to these cases respectively as ``1 tx," ``4 tx," and ``8 tx." 

Fig.~\ref{fig:OutVsRate} shows the outage probability versus the user rate ${\cal R}$ for a given arrival rate of $\lambda = 16$ and reference $\snr$ $\mu = 0$dB. The outages for the coordinated optimal, coordinated FDMA, uncoordinated optimal, and aloha FDMA strategies are given respectively by~\eqref{eq:optoutage_c},~\eqref{eq:fdmaoutage_c},~\eqref{eq:optoutage_uc},~\eqref{eq:fdmaoutage_uc}.  As expected, for a given outage probability, the coordinated FDMA strategy achieves a higher rate than the aloha FDMA strategy, and the coordinated optimal strategy achieves a higher rate than the uncoordinated optimal strategy. For the uncoordinated optimal strategy, the single transmission case is uniformly the best, i.e., for any outage probability, the corresponding rate for 1 tx is the highest. For the aloha FDMA strategy, the best average number of transmissions depends on the desired outage probability.  In the regime of interest where the outage probability is 0.1, the best aloha FDMA rate is achieved with 4 tx, and the coordinated FDMA rate is about a factor 4 greater than this rate. On the other hand, the coordinated optimal user rate is only about a factor of 1.2 greater than the best (1 tx) uncoordinated optimal rate. 

Still assuming $\lambda = 16$ and $\mu = 0$dB, Fig.~\ref{fig:TpVsRate} shows the throughput ${\cal S}(\lambda, \mu, {\cal R})$ versus user rate ${\cal R}$ for the four transmission options computed using \eqref{eq:S_c_x} for the coordinated strategies and \eqref{eq:S_uc_x} for the uncoordinated strategies.  For the case of the 1 tx uncoordinated optimal transmission, the maximum throughput is achieved when ${\cal R} = 0.4$ and the outage, obtained from Fig.~\ref{fig:OutVsRate}, is $\epsilon = 0.20$. For the other strategies, the maximum throughput is achieved at even higher outages. For coordinated optimal transmission, it is $\epsilon = 0.71$. For the 1-tx aloha and coordinated FDMA transmissions, the respective outages are $\epsilon = 0.50$ and $\epsilon = 0.89$ respectively. Because a practical system would not tolerate such high outages, we are motivated to consider the throughputs subject to a maximum outage constraint, which we do next.


Figure \ref{fig:TpVsLam} shows the maximum throughput (maximized with respect to the user rate) versus arrival rate $\lambda$ under the coordinated strategies when the outage probability is limited to $\epsilon_{\max} = 0.1$ \eqref{eq:optimization_c}  and under uncoordinated strategies when the failure probability is limited to $\delta_{\max} = 0.1$ \eqref{eq:optimization_uc}. The throughput increases linearly with respect to the log arrival rate. The throughput of the optimal transmission strategies seem to exhibit the same slope regardless of whether they are coordinated or uncoordinated, and the throughput slope of the optimal strategies are steeper than the FDMA strategies.  Among the uncoordinated optimal strategies, the 1 tx option is the best, but among the aloha FDMA strategies, the 4  tx option is the best. We note that the throughput spectral efficiency is relatively high because each device transmits with full power 10mW resulting in a fairly high total transmit power.  

\begin{figure}[t]
\centering
\includegraphics[width=1\columnwidth]{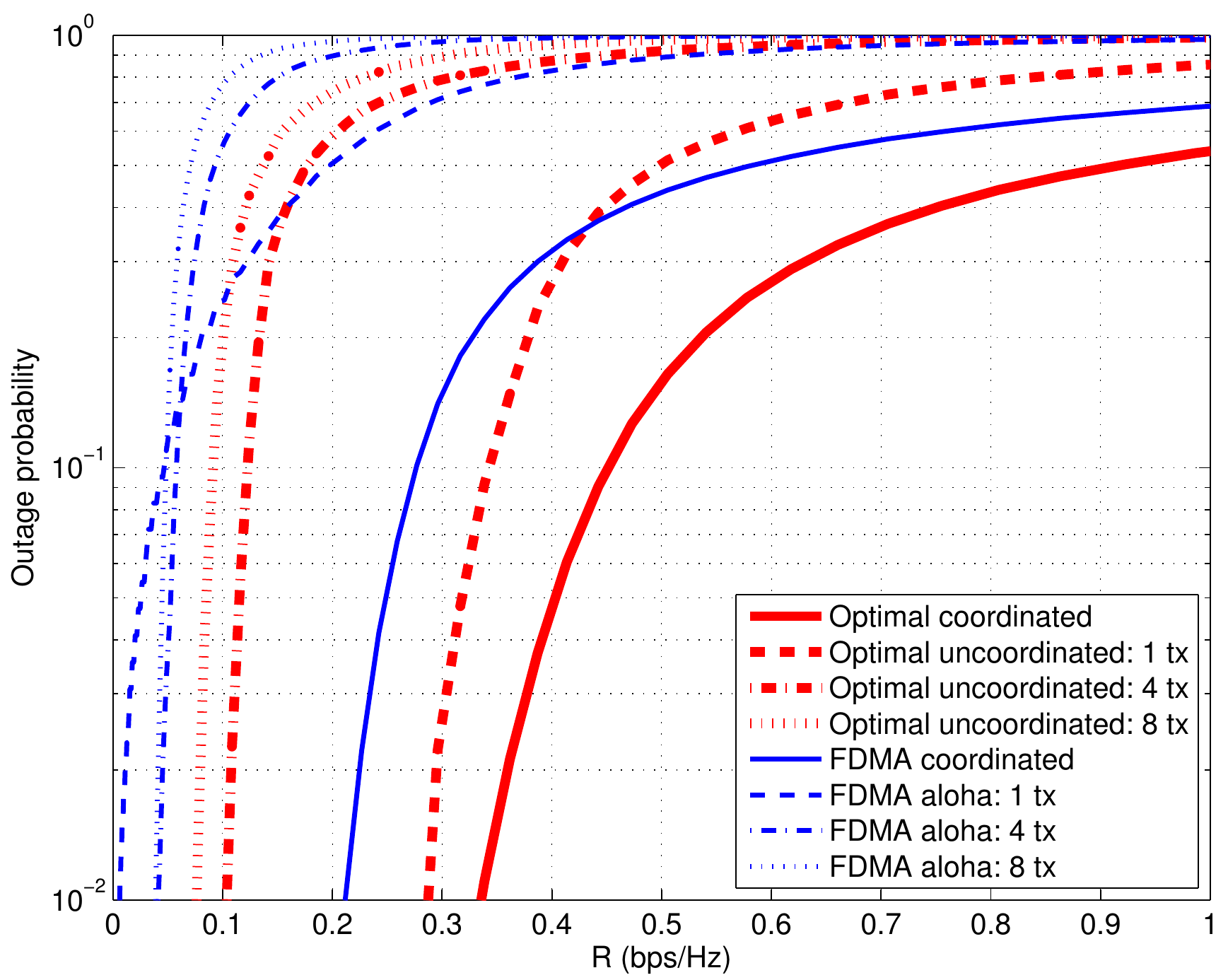}
\caption{Outage probability vs. common rate $\ncalR$ ($\lambda=16$).}
\label{fig:OutVsRate}
\end{figure}

\begin{figure}[t]
\centering
\includegraphics[width=1\columnwidth]{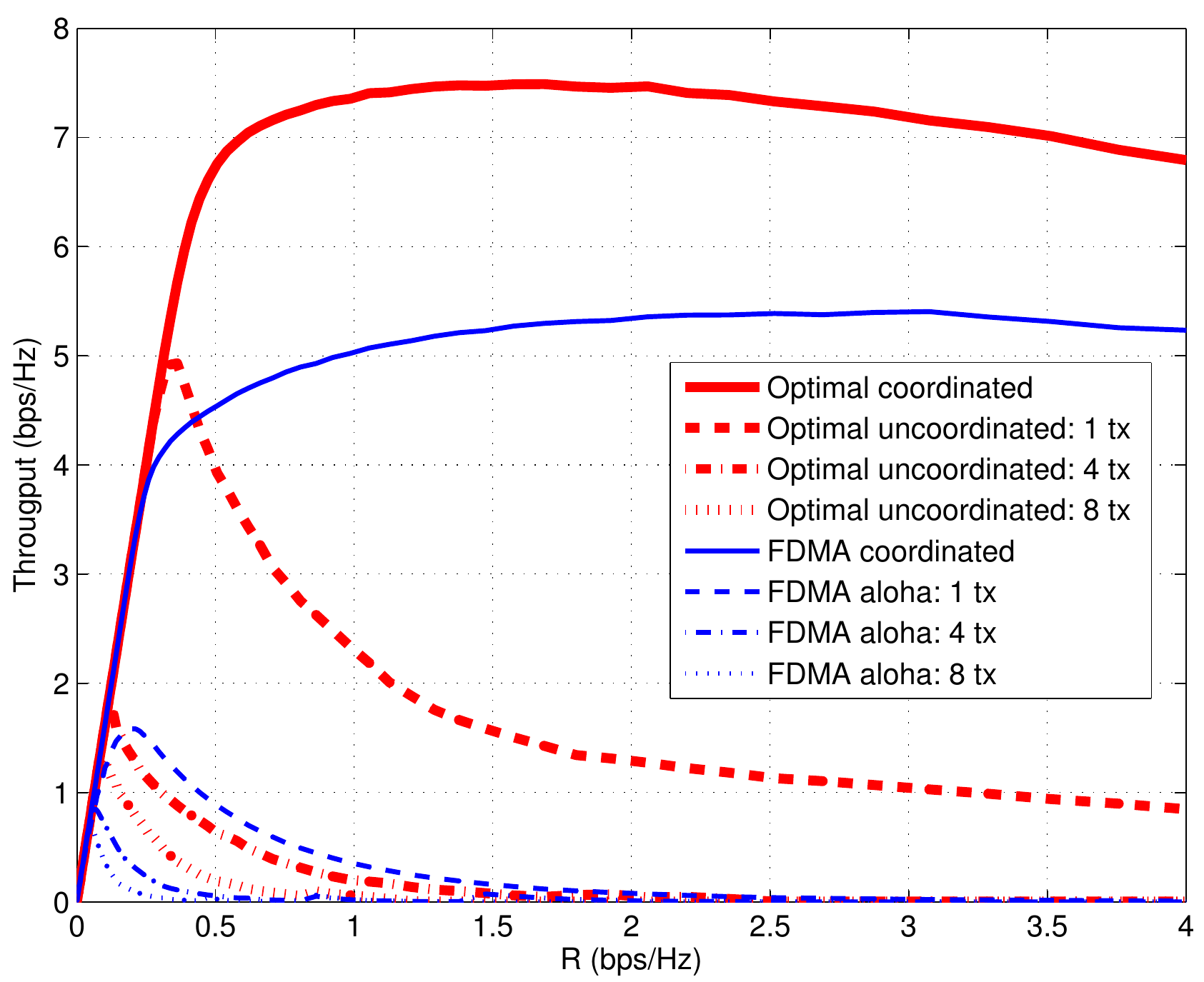}
\caption{Throughput vs. common rate $\ncalR$. The throughputs for coordinated and uncoordinated strategies are given by~\eqref{eq:S_c_x} and~\eqref{eq:S_uc_x}, respectively ($\lambda=16$).}
\label{fig:TpVsRate}
\end{figure}


\begin{figure}[t]
\centering
\includegraphics[width=1\columnwidth]{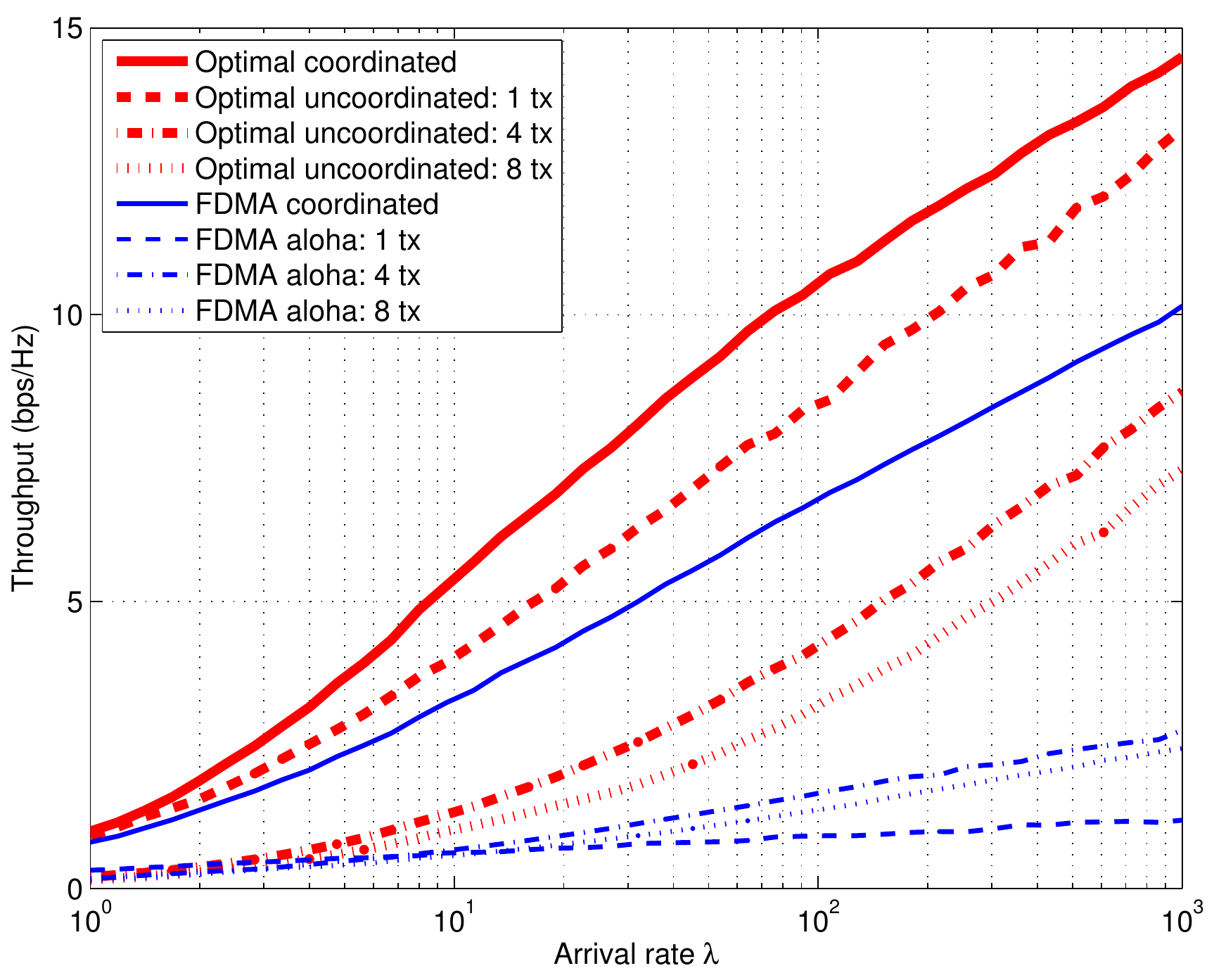}
\caption{Maximum throughput vs. arrival rate, subject to a maximum outage or failure probability constraint of $10\%$.}
\label{fig:TpVsLam}
\end{figure}


\section{One-Stage Versus Two-Stage Transmission}
In the previous sections, we analyzed the performance of coordinated and uncoordinated transmission strategies for multiple access. In this section we build upon these results to study the performance of more complete communication protocols that account for overhead signaling such as acknowledgements and scheduling information. 

In a conventional protocol for scheduled uplink communication in a cellular network, for example based on the LTE standard, multiple stages of uplink and downlink communication are used to establish the identity of a user, to resolve collisions on the random access channel, and to provide scheduling information. Following the handshaking mechanism, the data payload is transmitted on scheduled uplink resources. If the resources required for transmitting the data payload are large compared to the overhead resources, then it is worthwhile to invest in the overhead to ensure efficient communication of the payload over contention-free channels. On the other hand, if the data payload is small it may not be worthwhile to make this investment. Therefore, we are interested in studying a simpler protocol where the data payload and user identity are transmitted over a random access channel without any prior overhead. 
\begin{figure}[t]
\centering
\includegraphics[width=.85\columnwidth]{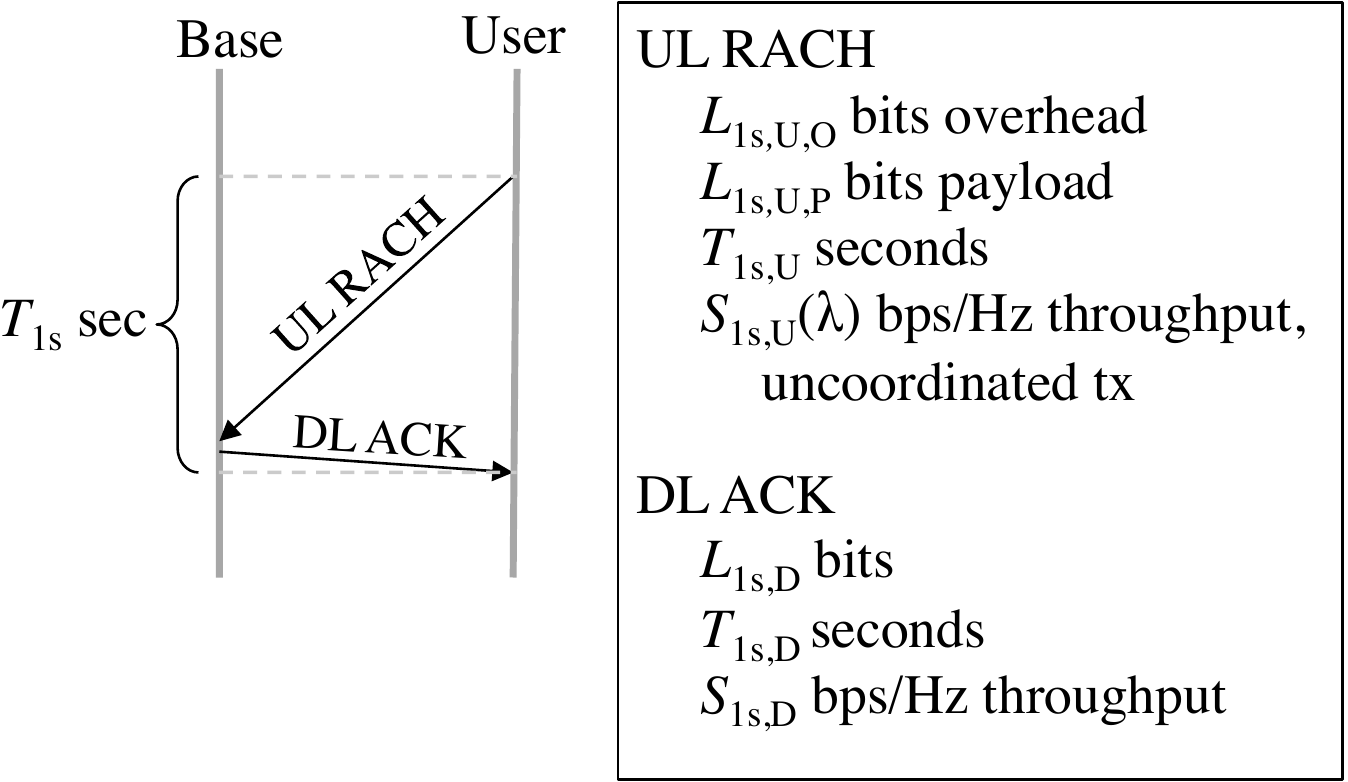}
\caption{One-stage protocol.}
\label{fig:1StageProtocol}
\end{figure}

\begin{figure}[t]
\centering
\includegraphics[width=.85\columnwidth]{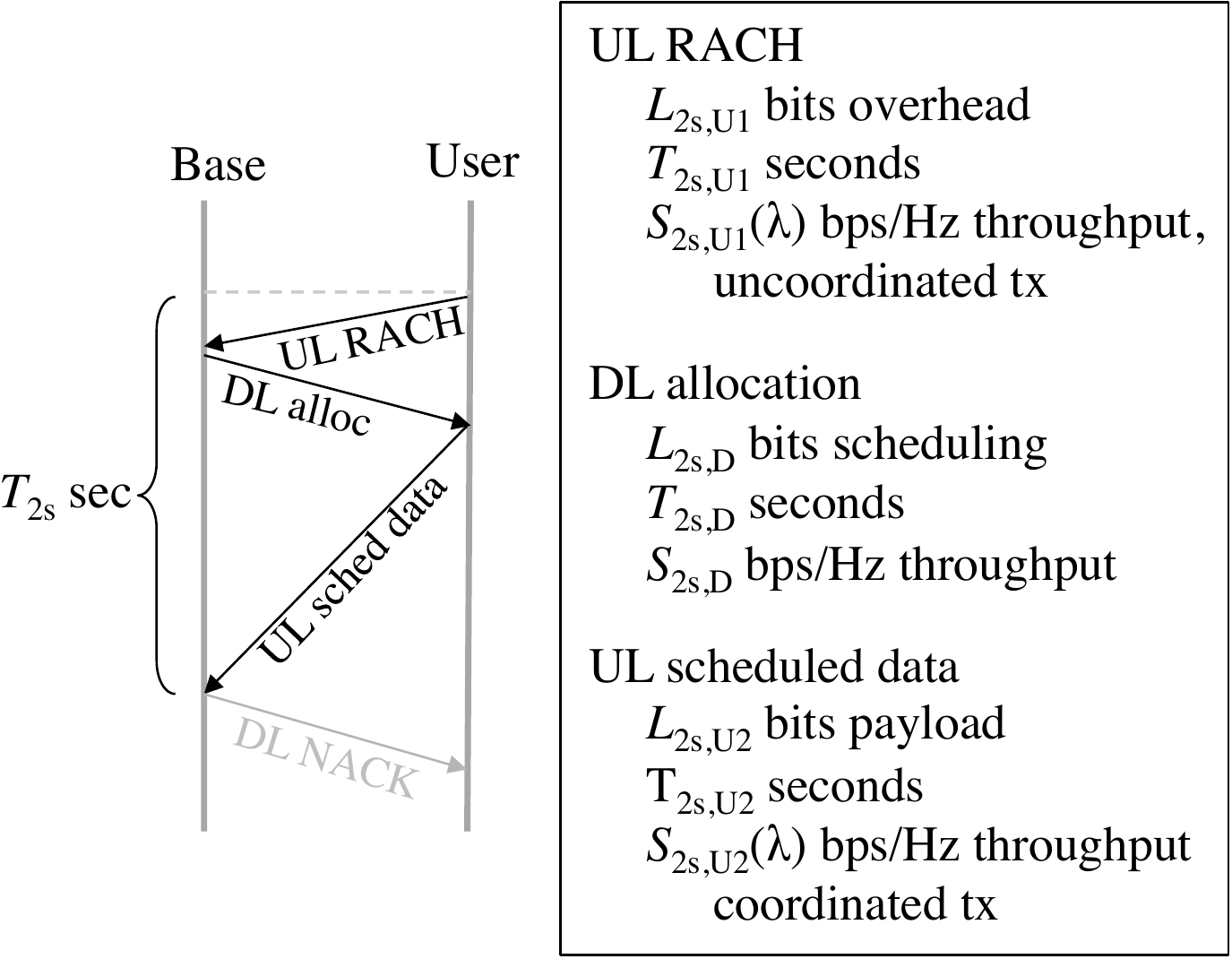}
\caption{Two-stage protocol.}
\label{fig:2StageProtocol}
\end{figure}

This simpler protocol is shown in Fig. \ref{fig:1StageProtocol}. We call this the {\em one-stage} protocol because there is a single stage of uplink transmission which occurs over a random access channel. Its performance can be characterized by the uncoordinated strategies discussed in Section \ref{sec:uncoordinated}. The protocol for scheduled data transmission, which we call the {\em two-stage} protocol, is shown in Fig. \ref{fig:2StageProtocol}. We simplify the multiple stages of the handshaking to just two stages, where the overhead is sent on the first uplink stage, and the data payload is sent on the second. Transmission on the first stage is over a random access channel while the transmission on the second stage is over scheduled resources. The performance of the two stages are characterized respectively by the uncoordinated and coordinated strategies (Section \ref{sec:coordinated}). Additional stages could have been considered, but this simplified framework allows for a more straightforward comparison with the one-stage protocol. We assume in general that users are synchronized so that the transmissions on the random access channels arrive at the base station at the beginning of a slot epoch (or minislot epoch when retransmissions are considered). The average number of users per slot of duration 1 second ``arriving" with data to transmit is $\lambda$. We do not account for the overhead of a broadcast downlink synchronization signal.

Under the one-stage protocol, users transmit $L_{1s,U}$ bits over the uplink random access channel. The bits consist of the data payload and the information for uniquely identifying the user. Given the bandwidth $W$ Hz and letting $T_{1s,U}$ be the transmission time in seconds, the throughput spectral efficiency for an average arrival rate $\lambda$ is upper bounded by the maximum throughput $\ncalS_{1s,U}$  (\ref{eq:S_uc_x})
\begin{equation}
\label{eq:1stageUl}
\frac{\lambda (L_{1s,U,O} + L_{1s,U,P})}{WT_{1s,U}} \leq \ncalS_{1s,U}(\lambda).
\end{equation}
If a user's packet is successfully decoded, the base station sends a positive acknowledgement consisting of $L_{1s,D}$ bits to the user over $T_{1s,D}$ seconds. We assume the downlink spectral efficiency $\ncalS_{1s,D}$ bps/Hz is independent of the user load. For a fixed downlink payload $L_{1s,D}$ and the downlink rate distribution among randomly, uniformly dropped users in a disk around the base generated according to $\log_2(1+\mu_{r(k)})$ (where $\mu_{r(k)}$ is the received $\snr$ of user $k$ located at a distance $r$ from the base station, see \eqref{eq:SNR}), the average spectral efficiency is the harmonic mean of the rates: $\ncalS_{1s,D} = \lim_{K \rightarrow \infty} K\left[ \sum_k \log_2(1_\mu{r(k)} \right]^{-1}$, where $K$ is the total number of users. Therefore,
\begin{equation}
\label{eq:1stageDl}
\frac{\lambda L_{1s,D}}{WT_{1s,D}} \leq \ncalS_{1s,D}(\lambda).
\end{equation}
Combining (\ref{eq:1stageUl}) and (\ref{eq:1stageDl}), and defining $[x]^+ = \max[x,0]$, an upper bound on the uplink payload $L_{1s,U}$ as a function of the capacity $\lambda$ is
\begin{align}
\label{eq:1stageLamVsL}
L_{1s,U,P} &\leq \left[\ncalS_{1s,U}(\lambda) \frac{WT_{1s,U}}{\lambda} - L_{1s,U,O} \right]^+ \nonumber \\
&\leq \left[\ncalS_{1s,U}(\lambda) \left(\frac{WT_{1s}}{\lambda} - \frac{D_{1s,D}}{\ncalS_{1s,D}} \right) - L_{1s,U,O}\right]^+.
\end{align}

In the two-stage strategy, users vie for attention from the base station by transmitting $L_{2s,U1}$ bits over a random access channel. For the set of successfully detected users, the base station determines how the uplink resources should be allocated for scheduled payload transmission. For each detected user, the base station conveys a downlink control message of $L_{2s,D}$ bits to indicate the allocated resource. Then these users each transmit $L_{2s,U2}$ bits of payload in a coordinated manner over the resources. Because users are assumed to have received the downlink resource allocation message correctly, a positive acknowledgement is not required from the base station after correctly demodulating the data payload. A negative acknowledgement is sent in the case it is not correctly received, but since this event will be rare, we do not account for its resources. For the uplink random access channel (RACH), downlink transmission, and uplink data transmission, we have the following inequalities
\begin{align}
\frac{\lambda L_{2s,U1}}{WT_{2s,U1}} &\leq \ncalS_{2s,U1}(\lambda) \\
\frac{\lambda L_{2s,D}}{WT_{2s,D}} &\leq \ncalS_{2s,D} \\
\frac{\lambda L_{2s,U2}}{WT_{2s,U2}} &\leq \ncalS_{2s,U2}(\lambda).
\end{align}
The uplink RACH throughput is given by the uncoordinated throughput (\ref{eq:S_uc_x}), and the scheduled uplink throughput is given by the coordinated throughput (\ref{eq:S_c_x}). Combining these inequalities yields the following relationship between the payload $L_{2s,U2}$ and arrival rate $\lambda$
\begin{align}
\label{eq:2stageLamVsL}
L_{2s,U2} &\leq \ncalS_{2s,U2}(\lambda) \frac{WT_{2s,U2}}{\lambda}  \nonumber \\
&\leq \left[\ncalS_{2s,U2}(\lambda) \left(\frac{WT_{2s}}{\lambda} - \frac{L_{2s,U1}}{\ncalS_{2s,U1}(\lambda)} - \frac{L_{2s,D}}{\ncalS_{2s,D}} \right) \right]^+. 
\end{align}

\begin{figure}[t]
\centering
\includegraphics[width=1\columnwidth]{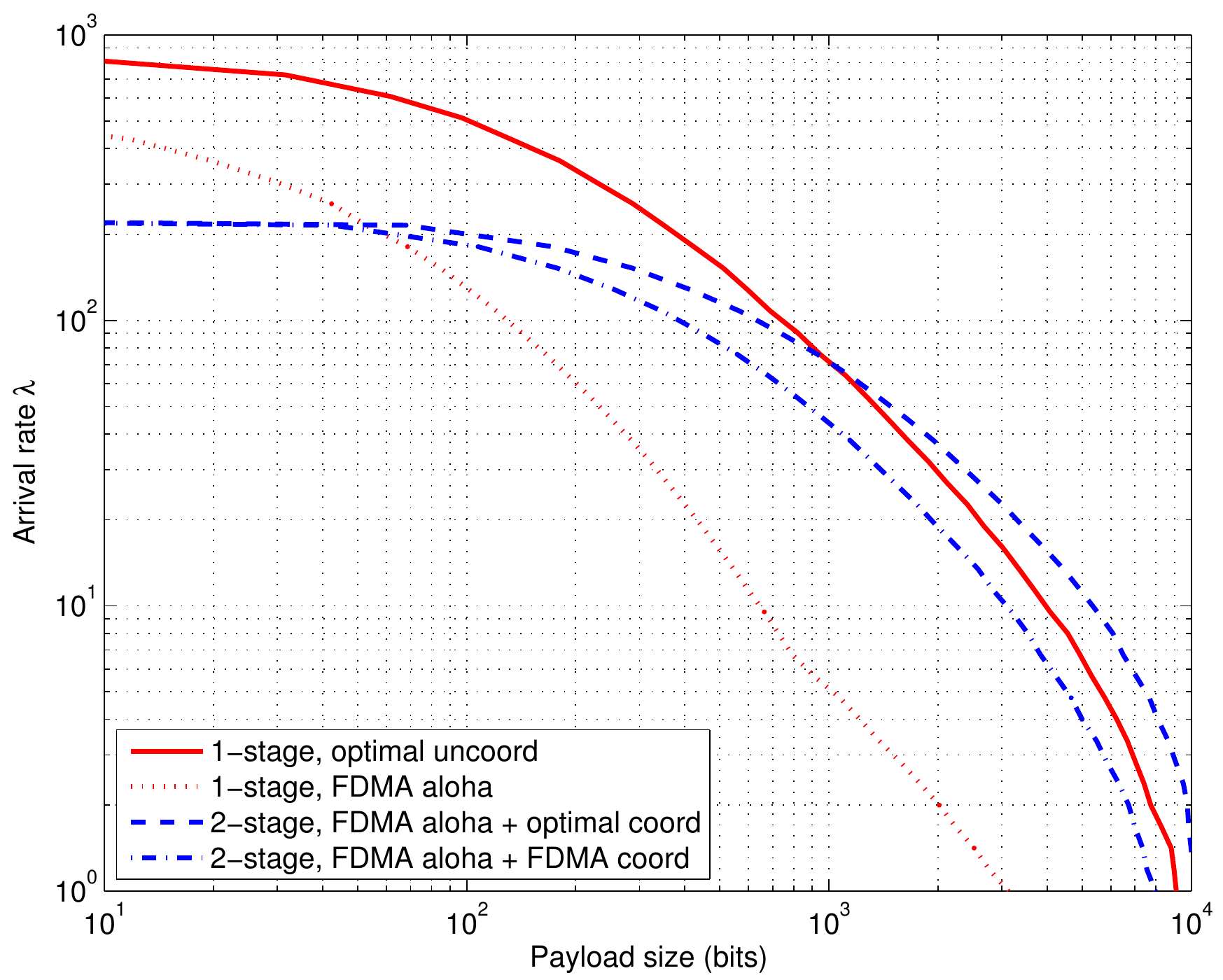}
\caption{Maximum supportable arrival rate versus payload size assuming $W=10$ KHz bandwidth resources and $1$ second latency constraint.}
\label{fig:Capacity1stageVs2stage}
\end{figure}

For the numerical results, we assume $W = 10$KHz of bandwidth resources, a reference $\snr$ of $\mu = 0$dB, and a latency constraint of 1sec. Using overhead parameters motivated by LTE, we assume 20 bits are used for ID, 64 bits are used for downlink scheduling, and 20 bits are used for positive acknowledgements. Therefore $L_{1s,U,O} = 20$, $L_{1s,D} = 20$, $L_{2s,U1} = 20$, $L_{2s,D} = 64$. For the one-stage strategy, we consider either uncoordinated optimal or aloha FDMA transmission. For the two-stage strategy, we assume aloha FDMA for the first stage and coordinated optimal or coordinated FDMA for the second stage. Assuming a downlink reference $\snr$ of 0dB, the downlink spectral efficiency computed from the harmonic mean of rates is $\ncalS_{2s,D} = 2.07$ bps/Hz.

Fig. \ref{fig:Capacity1stageVs2stage} shows the supportable arrival rate $\lambda$ versus the payload size as given by (\ref{eq:1stageLamVsL}) for the one-stage performance and (\ref{eq:2stageLamVsL}) for the two-stage performance. For smaller payload sizes, the supportable arrival rate for the one-stage strategies is higher than the arrival rate of the two-stage strategies because the two-stage overhead outweighs the relative inefficiency of the uncoordinated transmission. For larger payloads, the overhead becomes negligible, and the two-stage strategies are relatively more efficient. The crossover threshold between the one-stage optimal and two-stage with coordinated optimal second stage is about 1000 bits, so that the one-stage strategy supports a higher arrival rate for payloads smaller than this threshold. For a payload size of 100 bits, the supportable arrival rate for one-stage optimal is about 2.5 times that of the two-stage strategies. The crossover threshold between the one-stage aloha FDMA and two-stage strategies is about 60 bits. On the other hand, for larger payload sizes, the two-stage strategies are far superior compared to one-stage FDMA, and they achieve similar performance compared to one-stage optimal. 


\section{Conclusion}
In this paper, we developed a systematic framework to study the throughput optimal system design for randomly arriving M2M devices in a cellular uplink. Using a novel analytic framework, we characterized the maximum throughput achievable under a variety of coordinated and uncoordinated strategies, with the latter class of strategies having a provision for retransmissions under a maximum latency constraint. Incorporating in a novel way the joint decoding techniques of coordinated multiple access channels to random access, we obtained a new fundamental result characterizing the throughput performance of optimal uncoordinated random access transmission. Using these results, we perform a realistic comparison of a {\em one-stage} design, where the data payload is communicated through random access, and {\em two-stage} design, where the uplink connection is established though random access in the first stage and data payload is communicated over contention-free resources in the second stage. Our analysis concretely demonstrates that for payloads of 1000 bits or less, the  optimal one-stage design supports more devices than the two-stage design due to the reduced overhead. 

There are numerous extensions possible for this work. From system implementation perspective, it is important to understand whether it is preferable to implement optimal uncoordinated strategy proposed in this paper given its relatively high complexity and sensitivity to practical impairments such as non-ideal channel estimation. From information theory perspective, it is important to extend the analysis of optimal uncoordinated strategy to the case where each user has knowledge of its own channel and can thus perform power control in a distributed way. From cellular systems perspective, it is important to extend this study to multi-cell scenarios.

\appendices
\section{Proof of Theorem~\ref{thm:Delta} \label{App:Delta}}
Let $X_i$ be the back-off time after $i^{th}$ transmission. For notational simplicity let $X_0 = 1$ be the time of the first transmission. Now define the event $\mathcal{A}_n$ that there are exactly $n$ transmission attempts before the deadline is expired. It can be mathematically expressed as
\begin{align}
\mathcal{A}_n = \sum_{i=0}^{n-1} X_i \leq M, \sum_{i=0}^{n} X_i > M.
\end{align}
Now, we calculate the probability of these events, starting with $\mathcal{A}_1$ as follows
\begin{align}
\nbbP\left[\mathcal{A}_1 \right] &= \nbbP\left[   X_0 \leq M, X_0 + X_1 > M \right]
\stackrel{(a)}{=} \nbbP\left[X_1 > M-1 \right] \nonumber \\
&= 1 - \sum_{i=1}^{M - 1} \nbbP[X_1 = i]
= 1 - \frac{1}{T_w}\sum_{i=1}^{M - 1} \mathbf{1}(i \in \nbbZ_{1}^{T_w}) \label{eq:delta_A1}\nonumber \\
&= \left\{ \begin{array}{ll}
1 - \frac{M-1}{T_w}; & M < T_w + 1\\
0 & M \geq T_w + 1\\
\end{array} \right.
\end{align}
where $(a)$ follows from the fact that $X_0 = 1$ and $M \geq 1$ by assumption. Now the probability of the events $\mathcal{A}_n$, $1<n< Z$, can be calculated as follows
\begin{align}
\nbbP\left[\mathcal{A}_n \right] &= \nbbP\left[   \sum_{i=0}^{n-1} X_i \leq M, \sum_{i=0}^{n} X_i > M \right]\nonumber \\
&= \nbbP\left[   \sum_{i=1}^{n-1} X_i \leq M-1, \sum_{i=1}^{n} X_i > M-1 \right] \nonumber \\
&= \nbbP\left[M - 1 - X_n < S_{n-1} \leq M -1 \right]\nonumber\\
&= \frac{1}{T_w} \sum_{j=1}^{T_w} \nbbP\left[M - 1 - j < S_{n-1} \leq M -1 \right]\nonumber\\
&= \frac{1}{T_w} \sum_{j=1}^{T_w} \sum_{k=M - j}^{M - 1} \nbbP[S_{n-1} = k].
\end{align}
Using Lemma~\ref{lem:UnifSum}, it can be expressed as
\begin{align}
\frac{1}{T_w} \sum_{j=1}^{T_w} \sum_{k=M - j}^{M - 1} \mathbf{1} \left( k \in \nbbZ_{n-1}^{(n-1)T_w} \right) \frac{1}{T_w^{n-1}}  {n-1 \choose k-n}_{T_w}.
\label{eq:delta_An}
\end{align}
On the similar lines, the probability of the event $\ncalA_Z$ can be calculated as follows
\begin{align}
\nbbP\left[{\ncalA_Z}\right] &= \nbbP \left[ \sum_{i=0}^{Z-1} X_i \leq M \right]
= \nbbP \left[S_{Z-1} \leq M - 1 \right]\nonumber\\
&= \sum_{j=0}^{M - Z} \nbbP \left[S_{Z-1} = Z-1 + j  \right]\nonumber\\
&= \sum_{j=0}^{M-Z} \nb1 \left( j \in \nbbZ_{0}^{(Z-1)(T_w-1)}\right) \frac{1}{(T_w)^{Z-1}} {Z-1 \choose j}_{T_w},
\label{eq:delta_AZ}
\end{align}
where the last step follows from Lemma~\ref{lem:UnifSum}. The final result now follows from \eqref{eq:delta_A1}, \eqref{eq:delta_An} and \eqref{eq:delta_AZ} along with the fact that $\delta$ can be expressed in terms of $\{\ncalA_n\}$ as
\begin{align}
\delta &= \sum_{n=1}^Z \epsilon^n \nbbP[\mathcal{A}_n].
\label{eq:Delta_intermed}
\end{align}
This completes the proof. \hfill \QED


\bibliographystyle{IEEEtran}
\bibliography{M2M_Journal_v2.1}

\begin{thebibliography}{10}
\providecommand{\url}[1]{#1}
\csname url@samestyle\endcsname
\providecommand{\newblock}{\relax}
\providecommand{\bibinfo}[2]{#2}
\providecommand{\BIBentrySTDinterwordspacing}{\spaceskip=0pt\relax}
\providecommand{\BIBentryALTinterwordstretchfactor}{4}
\providecommand{\BIBentryALTinterwordspacing}{\spaceskip=\fontdimen2\font plus
\BIBentryALTinterwordstretchfactor\fontdimen3\font minus
  \fontdimen4\font\relax}
\providecommand{\BIBforeignlanguage}[2]{{%
\expandafter\ifx\csname l@#1\endcsname\relax
\typeout{** WARNING: IEEEtran.bst: No hyphenation pattern has been}%
\typeout{** loaded for the language `#1'. Using the pattern for}%
\typeout{** the default language instead.}%
\else
\language=\csname l@#1\endcsname
\fi
#2}}
\providecommand{\BIBdecl}{\relax}
\BIBdecl

\bibitem{DhiHuaC2013}
H.~S. Dhillon, H.~C. Huang, H.~Viswanathan, and R.~A. Valenzuela, ``Throughput
  optimal communication strategy for wireless random access channel,'' in {\em
  IEEE Globecom}, Atlanta, GA, Dec. 2013.

\bibitem{FocM2003}
{The FocalPoint Group}, ``{M2M} white paper: the growth of device
  connectivity,'' white paper, 2003, available online: goo.gl/S0j5O.

\bibitem{VodM2010}
Vodafone, ``Global machine to machine communication,'' white paper, 2010,
  available online: goo.gl/4V8az.

\bibitem{EriM2011}
Ericsson, ``Device connectivity unlocks value,'' white paper, Jan. 2011,
  available online: goo.gl/alou6.

\bibitem{LieCheJ2011}
S.-Y. Lien, K.-C. Chen, and Y.~Lin, ``Toward ubiquitous massive accesses in
  {3GPP} machine-to-machine communications,'' \emph{IEEE Communications
  Magazine}, vol.~49, no.~4, pp. 66 -- 74, Apr. 2011.

\bibitem{GSMAM2012}
GSMA, ``Experience a world where everything intelligently connects: {The
  Connected Life},'' white paper, Feb. 2012, available online: goo.gl/gaqeD.

\bibitem{EriM2011a}
Ericsson, ``More than 50 billion connected devices,'' white paper, Feb. 2011,
  available online: goo.gl/vjYO0.

\bibitem{ShaJiC2012}
M.~Z. Shafiq, L.~Ji, A.~X. Liu, J.~Pang, and J.~Wang, ``A first look at
  cellular machine-to-machine traffic -- large scale measurement and
  characterization,'' in \emph{Proc., ACM SIGMETRICS}, London, Jun. 2012.

\bibitem{DhiHuaJ2013}
H.~S. Dhillon, H.~C. Huang, H.~Viswanathan, and R.~A. Valenzuela,
  ``Power-efficient system design for cellular-based machine-to-machine
  communications,'' {\em IEEE Tran. Wireless Communications}, to appear.
  Available online: arxiv.org/abs/1301.0859.

\bibitem{JouAttC2011}
Y.~Jou, R.~Attar, S.~Ray, J.~Ma, and X.~Zhang, ``{M2M} over {CDMA2000} 1x case
  studies,'' in \emph{Proc., IEEE Wireless Communications and Networking Conf.
  (WCNC)}, Cancun, Mexico, Mar. 2011, pp. 1546 -- 1551.

\bibitem{IEEEM2008}
{\em Health Informatics - PoC Medical Device Communication - Part 00101:
  Guide--Guidelines for the Use of {RF} Wireless Technology}, IEEE Std
  11073-00101-2008, Dec. 2008.

\bibitem{3GPPM2010}
{\em Study on {RAN} improvements for Machine-Type Communications}, 3GPP TR
  37.868, 2010.

\bibitem{ZheHuJ2012}
K.~Zheng, F.~Hu, W.~Wang, W.~Xiang, and M.~Dohler, ``Radio resource allocation
  in {LTE}-advanced cellular networks with {M2M} communications,'' \emph{IEEE
  Communications Magazine}, vol.~50, no.~7, pp. 184--192, Jul. 2012.

\bibitem{GotLioJ2013}
A.~G. Gotsis, A.~S. Lioumpas, and A.~Alexiou, ``Analytical modelling and
  performance evaluation of realistic time-controlled {M2M} scheduling over
  {LTE} cellular networks,'' \emph{Trans. on Emerging Telecommunications
  Technologies}, vol.~24, no.~4, pp. 378 -- 388, Jun. 2013.

\bibitem{CheWanC2010}
Y.~Chen and W.~Wang, ``Machine-to-machine communication in {LTE-A},'' in
  \emph{Proc., IEEE Veh. Technology Conf. (VTC)}, Ottawa, Canada, Sep. 2010.

\bibitem{DhiHuaC2012}
H.~S. Dhillon, H.~C. Huang, H.~Viswanathan, and R.~A. Valenzuela, ``On resource
  allocation for machine-to-machine {(M2M)} communications in cellular
  networks,'' in \emph{Proc., IEEE Globecom Workshops}, Anaheim, CA, Dec. 2012.

\bibitem{MarKirC2005}
M.~Martsola, T.~Kiravuo, and J.~K.~O. Lindqvist, ``Machine to machine
  communication in cellular networks,'' in \emph{Proc., IEE Mobility
  Conference}, Guangzhou, China, Nov. 2005.

\bibitem{CheYanC2009}
Y.~Chen and Y.~Yang, ``Cellular based machine to machine communication with
  un-peer2peer protocol stack,'' in \emph{Proc., IEEE Veh. Technology Conf.
  (VTC)}, Anchorage, Alaska, Sep. 2009.

\bibitem{ChaCheC2011}
H.~Chao, Y.~Chen, and J.~Wu, ``Power saving for machine to machine
  communications in cellular networks,'' in \emph{Proc., IEEE Globecom
  Workshops}, Houston, TX, Dec. 2011, pp. 389 -- 393.

\bibitem{LioAleC2011}
A.~S. Lioumpas and A.~Alexiou, ``Uplink scheduling for machine-to-machine
  communications in {LTE}-based cellular systems,'' in \emph{Proc., IEEE
  Globecom Workshops}, Houston, TX, Dec. 2011, pp. 353 -- 357.

\bibitem{RatTanC2012}
R.~Ratasuk, J.~Tan, and A.~Ghosh, ``Coverage and capacity analysis for machine
  type communications in {LTE},'' in \emph{Proc., IEEE Veh. Technology Conf.
  (VTC)}, Yokohama, Japan, May 2012.

\bibitem{LoLawC2011}
A.~Lo, Y.~W. Law, M.~Jacobsson, and M.~Kucharzak, ``Enhanced {LTE}-advanced
  random-access mechanism for massive machine-to-machine ({M2M})
  communications,'' in \emph{Proc., Wireless World Research Form (WWRF)},
  Dusseldorf, Germany, Oct. 2011.

\bibitem{HoHuaJ2012}
C.~Y. Ho and C.-Y. Huang, ``Energy-saving massive access control and resource
  allocation schemes for {M2M} communications in {OFDMA} cellular networks,''
  \emph{IEEE Wireless Communications Letters}, vol.~1, no.~3, pp. 209 -- 212,
  Jun. 2012.

\bibitem{BarHerC2011}
A.~Bartoli, J.~{Hern\'{a}ndez-Serrano}, M.~Dohler, A.~Kountouris, and
  D.~Barthel, ``Low-power low-rate goes long-range: The case for secure and
  cooperative machine-to-machine communications,'' in \emph{Proc, Workshop on
  Wireless Cooperative Network Security (WCNS)}, Valencia, Spain, May 2011.

\bibitem{TuHoC2011}
C.-Y. Tu, C.-Y. Ho, and C.-Y. Huang, ``Energy-efficient algorithms and
  evaluations for massive access management in cellular based machine to
  machine communications,'' in \emph{Proc., IEEE Veh. Technology Conf. (VTC)},
  San Francisco, CA, Sep. 2011.

\bibitem{3GPPM2011}
{\em System Improvements for Machine-Type Communications}, 3GPP TR 23.888,
  2011.

\bibitem{3GPPM2011a}
{\em Service requirements for Machine-Type Communications}, 3GPP TS 22.368,
  2011.

\bibitem{EphHajJ1998}
A.~Ephremides and B.~Hajek, ``Information theory and communication networks: An
  unconsummated union,'' \emph{IEEE Trans. on Info. Theory}, vol.~44, no.~6,
  pp. 2416 -- 2434, Oct. 1998.

\bibitem{MinFraJ2012}
P.~Minero, M.~Franceschetti, and D.~Tse, ``Random access: An
  information-theoretic perspective,'' \emph{IEEE Trans. on Info. Theory},
  vol.~58, no.~2, pp. 909 -- 930, Feb. 2012.

\bibitem{3GPPM2012b}
{\em GERAN Improvements for Machine-type Communications}, 3GPP TR 43.868, May
  2012.

\bibitem{SleJ1963}
D.~Slepian, ``Bounds on communications,'' \emph{The Bell System Technical
  Journal}, vol.~42, no.~3, pp. 681--707, May 1963.

\bibitem{VisJ2009}
H.~Viswanathan, ``Uncoordinated orthogonal frequency division multiple access:
  To spread or not to spread,'' \emph{IEEE Trans. on Wireless Communications},
  vol.~8, no.~1, pp. 347 -- 355, Jan. 2009.

\bibitem{TyaAurJ2012}
R.~R. Tyagi, F.~Aurzada, K.-D. Lee, and M.~Reisslein, ``Impact of
  retransmission limit on throughput and delay of preamble contention in
  {LTE}-advanced random access,'' submitted to {\em IEEE Systems Journal},
  2012. Available online: http://goo.gl/KNghAq.

\bibitem{AbrC1970}
N.~Abramson, ``The {ALOHA} system -- another alternative for computer
  communications,'' in \emph{Proc. Fall 1970 AFIPS Computer Conference}, Nov.
  1970, pp. 281 -- 285.

\bibitem{BiaJ2000}
G.~Bianchi, ``Performance analysis of the {IEEE} 802.11 distributed
  coordination function,'' \emph{IEEE Journal on Sel. Areas in Communications},
  vol.~18, no.~3, pp. 535 -- 547, Mar. 2000.

\bibitem{3GPPM2012}
{\em Evolved Universal Terrestrial Radio Access {(E-UTRA)}; Medium Access
  Control {(MAC)} protocol specification}, 3GPP TS 36.321, 2012.

\bibitem{CaiRatC2007}
C.~C.~S. Caiado and P.~N. Rathie, ``Polynomial coefficients and distribution of
  the sum of discrete uniform variables,'' in \emph{Proc. Eighth Annual
  Conference of the Society of Special Functions and their Applications}, Pala,
  India, May 2007.

\bibitem{CovThoB1991}
T.~M. Cover and J.~A. Thomas, \emph{Elements of information theory}.\hskip 1em
  plus 0.5em minus 0.4em\relax New York: Wiley-Interscience, 1991.

\bibitem{KinB1993}
J.~F.~C. Kingman, \emph{Poisson Processes}.\hskip 1em plus 0.5em minus
  0.4em\relax Oxford University Press, 1993.

\bibitem{HuaPapB2012}
H.~Huang, C.~B. Papadias, and S.~Venkatesan, \emph{MIMO Communication for
  Cellular Networks}.\hskip 1em plus 0.5em minus 0.4em\relax Springer, 2012.

\end{thebibliography}

\end{document}